\newtheorem{theorem}{Theorem}
\newtheorem{defn}{Definition}
\newtheorem{lemma}{Lemma}
\begin{document}
	\preprint{APS/123-QED}
	
	\title{Device-independent secret key rates via a post-selected Bell inequality} 
	\author{Sarnava Datta}
	\email{Sarnava.Datta@hhu.de}
	\author{Hermann Kampermann}
	\author{Dagmar Bru\ss}
	\affiliation{ Institut f\"{u}r Theoretische Physik III\\
		Heinrich-Heine-Universit\"at D\"usseldorf}
	\date{\today}
	
	\begin{abstract}
		In device-independent quantum key distribution (DIQKD)   the security is not based on any assumptions about the intrinsic properties of the devices and the quantum signals, but on the violation of a Bell inequality. We introduce a  DIQKD scenario in which an optimal  Bell inequality is constructed from the performed measurement data, rather than fixing beforehand a specific Bell inequality. Our method can be 
		employed in a general way, for any number of measurement settings and any number of outcomes. We provide an implementable DIQKD protocol and perform finite-size security key analysis for collective attacks. We compare our approach with related procedures in the literature and analyze the robustness of our protocol. We also study the performance of our method in several Bell scenarios as well as for random measurement settings.    	 	 
	\end{abstract}
	\maketitle
	\section{Introduction}
	Data security concerns are prevalent in the modern world. One of the most prominent domains of quantum communication is quantum key distribution (QKD) which allows to distribute a secure key between two (or more) parties, namely Alice and Bob, where the security is only based on the laws of quantum mechanics. Since the inception of QKD \cite{bennett1984proceedings}, a variety of QKD protocols \cite{ekert1991quantum,bennett1992quantum,bruss1998optimal,renner2008security,lo2005decoy,gottesman2004security,shor2000simple,scarani2009security,ma2005practical,lo2014secure,tomamichel2012tight} has been introduced. However, the security of these device-dependent protocols needs complete characterization of the devices, sources, and/or the channel between the parties. In a realistic scenario, the device can be not completely characterized, or could even be prepared by a  malicious eavesdropper (Eve). Furthermore, hacking of existing implementations that exploits experimental imperfections was demonstrated \cite{lydersen2010hacking, gerhardt2011full, zhao2008quantum}. To overcome these drawbacks, device-independent (DI) QKD was introduced \cite{mayers1998quantum}, where the security does not require any assumptions about the inherent properties of the devices, or the dimension of the Hilbert space of the quantum signals. The security of DI protocols is based on the observation of a loophole-free Bell inequality violation \cite{barrett2005no,acin2007device, acin2006bell, pironio2009device, hanggi2010device, hanggi2010efficient, masanes2014full, arnon2018practical, arnon2019simple, masanes2011secure, acin2006bell,  ribeiro2018fully, holz2019genuine, vazirani2014fully} which guarantees the quantum nature of the observed data. The length of the secret key will depend on the estimated violation of the Bell inequality. 
	\\ 
	In this article, we introduce a DIQKD scenario in which the Bell inequality is not agreed upon beforehand, but will be constructed from the observed probability distribution of the measurement outcomes. We follow a two-step process: From the input-output probability distribution, we construct a Bell inequality that leads to the maximum Bell violation for that particular measurement setting of Alice and Bob. Then we use this optimized Bell inequality and the corresponding violation to bound the secret key rate. 
	- Note that in \cite{nieto2014using,bancal2014more} the authors introduced an alternative approach to bound the device-independent secret key rate via the measurement statistics. We will relate and compare our method with theirs in the Results section (Sec. \ref{Results}). In particular, we show that our procedure is advantageous in the non-asymptotic regime.\\
	This paper is organized as follows. We start in Sec. \ref{Framework} by briefly reviewing classical and quantum correlations. Then we explain how to obtain the optimal Bell inequality from the observed probability distribution.	We lay the framework to provide a confidence interval for the Bell expectation value in Sec. \ref{statfluc}. We provide an implementable DIQKD protocol in Sec. \ref{protocol} and calculate the finite-size secret key rate in Sec. \ref{BellVal_skr}. In Sec. \ref{Results}, we illustrate our method with several examples.  
	
	\section{General Framework}\label{Framework}
	In this section, we review the concept of the classical correlation polytope in Sec. \ref{Correlation set} and, based on this, we explain 
	in Sec. \ref{Designing BI} how to construct Bell inequalities that are maximally violated by the measurement data. 
	\subsection{Set of correlations}\label{Correlation set}
	Consider a set-up for two parties\footnote{Note that our method can be extended in a 
	straightforward way to $n$ parties.} (namely, Alice and Bob) connected by a quantum channel. The parties perform local measurements on a joint quantum state. Let us assume that Alice and Bob have $ m_a $ and $ m_b $ measurement settings, respectively. Alice's set of measurement settings is denoted as $ X=\{1,\cdots,m_a\} $, and Bob's set of measurement settings as $ Y=\{1,\cdots,m_b\} $. To estimate the probability distribution from the experimental data, we have to use the measurement device $ N $ times in succession. We assume that the devices behave independently and identically (i.i.d.) in each round, i.e. the results of the $ i^{th} $ round are independent of the past $ i-1 $ rounds. The setting of the $i^{th}$ round is denoted as $ x_i \in  X$ for Alice and $ y_i \in Y $ for Bob. Each of these measurement settings has $d$ outcomes which are denoted as $a_i \in A=\{1,\cdots, d\}$ for Alice and $b_i \in B=\{1,\cdots, d\}$ for Bob. We call this the $ [(m_a,m_b),d] $ scenario, i.e. 2 parties with $ (m_a,m_b) $ measurement settings and $ d $ outcomes each. When both parties have an equal number of measurement settings, i.e. $m_a = m_b = m$, we will denote this as $[m,d]$ scenario. The joint probability of getting outcome $a$ when Alice is using the measurement setting $x \in X$ and $b$ when Bob uses the measurement setting $y \in Y$ is denoted as $P(A^{a}_{x}B^{b}_{y})$. All these joint probabilities will be collected in a probability vector
		\begin{equation}\label{Obs. Prob. dist.}
		\textbf{P}:= [P(A_{x}^{a}B_{y}^{b})] \, ,
		\end{equation}
		where $x \in X=\{ 1, \cdots, m_a \}$, $y \in Y=\{ 1, \cdots, m_b \}$, $ a \in A=\{ 1,2, \cdots, d \}$ and $ b \in B=\{ 1,2, \cdots, d \}$. The associated probability space is of dimension 
		\begin{equation}\label{hyperplane dim}
		D_{m_a,m_b}^d := m_a m_b d^{2} \, .
		\end{equation} 
	\\
	The set of all probabilities that represent a classical or locally real theory forms a convex polytope \cite{pitowski1989quantum,fine1982hidden,pitowsky1991correlation}. We denote this polytope as $\mathcal{P}$. Any probability distribution which is not contained in $\mathcal{P}$ shows non-classical or quantum behaviour and can be witnessed by the violation of a Bell inequality \cite{bell1964einstein}. As illustrated in \cite{szangolies2017device}, the polytope of classical correlations can be characterized by its extremal points $\textbf{v}_p$, where $p=\{1, 2, \cdots, d^{m_a+m_b}\}$, and $\textbf{v}_p$ has entries from the set \{0,1\}. The extremal points of the polytope correspond to deterministic strategies. Every classical correlation $\textbf{P}_{cl} \in \mathcal{P}$ can be written as a convex combination of all the deterministic strategies as
	\begin{equation}\label{classical decompostion}
	\textbf{P}_{cl}=\sum_{p=1}^{d^{m_a+m_b}}\lambda_p \textbf{v}_p \, ,
	\end{equation}
	where $\lambda_p \geq 0$ and $\sum_{p=1}^{d^{m_a+m_b}}\lambda_p=1$. This subsequently implies that every observed probability distribution which cannot be decomposed as shown in Eq.~(\ref{classical decompostion}) violates at least one Bell Inequality. 
	\subsection{Designing Bell inequalities}\label{Designing BI}
	\begin{figure}[h]
		\includegraphics[height=6cm, width=7cm]{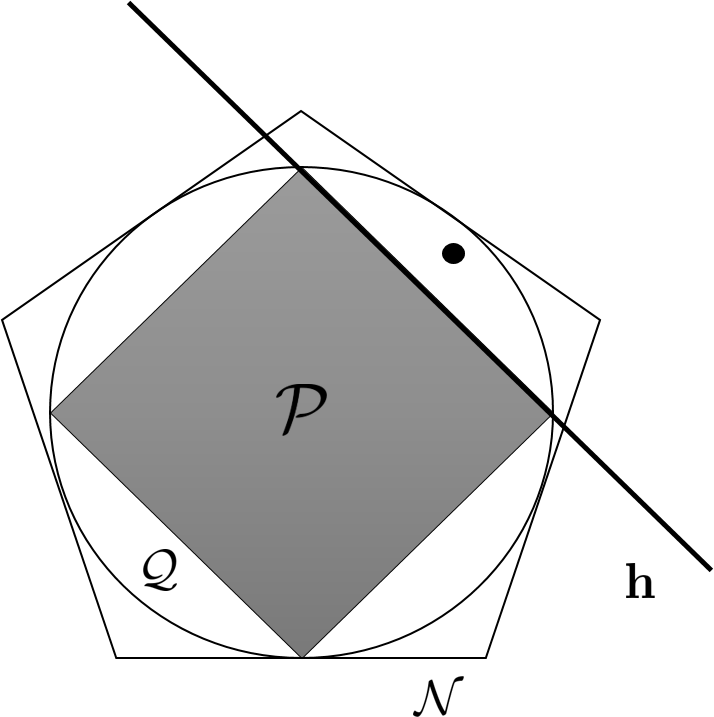}
		\caption{A sketch for the set of correlations. All
			classical probabilities form a convex polytope $\mathcal{P}$, which is embedded in the set $\mathcal{Q}$ of quantum correlations, which in turn is a subset of the non-signalling polytope $\mathcal{N}$. The Bell inequality is specified by the
			vector $ \textbf{h} $ defining a hyperplane which separates all vertices $ v_p $ from the observed probability distribution $ \textbf{P}$ (the black point situated outside the classical polytope $ \mathcal{P} $).}
		\label{Classical_Polytope}
	\end{figure}	
	Consider the $ [(m_a,m_b),d] $ scenario where the parties receive the measurement data $\textbf{P}$. In order to extract a secret key from these classical measurement data, they need to violate a Bell inequality. As shown in \cite{szangolies2017device}, this scenario can be translated to a linear separation problem. For illustration, see Fig. \ref{Classical_Polytope}. Bell inequalities correspond to hyperplanes in the probability space that separate the classical correlation polytope $\mathcal{P}$ from the set of all genuine quantum correlations $\mathcal{Q} \setminus \mathcal{P}$. Such hyperplanes are specified by a normal vector $\textbf{h} \in \mathbb{R}^{D^{d}_{m_a,m_b}}$, with the dimension given in Eq.~(\ref{hyperplane dim}). If $\textbf{P} \in \mathcal{Q} \setminus \mathcal{P}$, there exists at least one hyperplane $ \textbf{h} $ that separates all the vertices $ \textbf{v}_p $ of $ \mathcal{P} $ from the observed probability distribution $ \textbf{P} $. We set the objective of the linear program to find the hyperplane vector $ \textbf{h} $ corresponding to the Bell inequality which is maximally violated by the measurement data $ \textbf{P} $. This optimization problem can be formulated as:  
	\begin{equation}\label{linear_optimization}
	\begin{aligned}
	& \underset{\textbf{h},c}{\text{max}}
	& & \textbf{h}^T \textbf{P}-c\\
	& \text{subject to}
	& & \textbf{\textbf{h}}^T \textbf{v}_{p} \leq c \quad \forall \quad p \in \{1,\cdots,d^{m_a+m_b} \} \\
	&&& \textbf{h}^T \textbf{P} > c \\
	&&& -1 \leq h_i \leq 1 \quad \forall \quad i \in \{1,\cdots,D_{m_a,m_b}^d \}
	\end{aligned}
	\end{equation}
	with the classical bound c. The additional constraint imposed on the elements of $h_i$ of the hyperplane vector $ \textbf{h} $  keeps the maximization bounded. The chosen boundaries of $h_i$ do not influence the result of the optimization problem besides being a global scaling factor. The hyperplane found in this manner has the form
    \begin{equation}\label{hyperplane}
	\textbf{h}=[h_{A_x B_y}^{ab}] \, , 
	\end{equation}
	where here and in the following
	$x \in X=\{ 1, \cdots, m_a \}$, $y \in Y=\{ 1, \cdots, m_b \}$, $ a \in A=\{ 1,2, \cdots, d \}$ and $ b \in B=\{ 1,2, \cdots, d \}$. Thus, the Bell inequality found by the optimization and specified by the hyperplane vector \textbf{h} is given as
	\begin{equation}\label{Hyperplane BI}
	B[\textbf{P}]=\sum_{a ,b , x , y }
	h_{A_x B_y}^{ab}P(A_{x}^{a}B_{y}^{b}) \leq c \, .
	\end{equation}
	Eq.~(\ref{Hyperplane BI}) represents the Bell inequality that is maximally violated by the observed probability distribution $ \textbf{P} $. Note that, if $ \textbf{P} \in \mathcal{P}$, the optimization problem Eq.~(\ref{linear_optimization}) is infeasible and no Bell inequality can be found.
	
	\section{Statistical Fluctuations and their Estimation}\label{statfluc}
	So far, we have concentrated on the ideal asymptotic case, that is, using the exact probabilities as entries of the observed probability distribution $\textbf{P}$. However, in a real experiment, one does not have access to probabilities, but only to frequencies that are subject to statistical uncertainties and systematic errors. Since systematic errors mostly arise from specific experimental settings, we solely focus on the theoretical framework and concentrate on statistical fluctuations as they lead to uncertainties in the observed Bell violation. \\
	Let Alice and Bob perform N rounds of measurements. The number of instances when Alice chooses measurement $x \in X$ and Bob chooses measurement $y \in Y$ is denoted by $N_{x,y}$. In a real experiment, instead of having access to joint probabilities, we estimate them by the joint frequencies $\hat{P}(A_{x}^{a}B_{y}^{b})=\frac{N(a,b,x,y)}{N_{x,y}}$. Here $N(a,b,x,y)$ is the number of occurrences of the corresponding input-output pair.   \\
	
    The Bell value $B[\hat{\textbf{P}}]$ is a function of the joint frequencies,
	\begin{align}\label{estimatedBI}
	B[\hat{\textbf{P}}] &= h_{A_x B_y}^{ab}\hat{P}(A_{x}^{a}B_{y}^{b}) \, , \quad 
	\end{align}
	see also Eq.~(\ref{Hyperplane BI}).
	Let $\chi(e)$ be an indicator function for a particular event $e$, i.e. $\chi(e)=1$ if the event $ e $ is observed, $\chi(e)=0$ otherwise. We introduce a random variable 
	\begin{equation*}
	\hat{B}_i = \sum_{a ,b , x , y }
	h_{A_xB_y}^{ab}\frac{\chi(a_i=a,b_i=b,x_i=x,y_i=y)}{\hat{p}(x_i=x,y_i=y)} \, ,
	\end{equation*}
	where  $\hat{p}(x_i=x,y_i=y)=\frac{N_{x,y}}{N}$ is the input joint frequency distribution. We get $\frac{1}{N}\sum_{i=1}^{N}\hat{B}_{i}=B[\hat{\textbf{P}}]$. Defining
	\begin{align*}
	q_{\mathrm{min}} &= \min_{a ,b , x , y } \, 
	\frac{h_{A_xB_y}^{ab}}{\hat{p}(x_i=x,y_i=y)} , \\
	q_{\mathrm{max}} &=  \max_{a ,b , x , y } \, 
	\frac{h_{A_xB_y}^{ab}}{\hat{p}(x_i=x,y_i=y)} ,
	\end{align*}
	we have $ q_{\mathrm{min}} \leq \hat{B}_{i} \leq q_{\mathrm{max}} $.
	We define $\gamma := q_{\mathrm{max}}-q_{\mathrm{min}}$. By using Hoeffding's inequality \cite{hoeffding1963large,hoeffding1994probability} (see Lemma. \ref{Hoeffding lemma} in the appendix), we can bound the deviation $ \delta $ of the Bell value obtained by the frequencies from the asymptotic value by a probability:
	\begin{equation}\label{hoeffding ineq}
	\text{Pr}\bigg(B[\textbf{P}] \ge B[\hat{\textbf{P}}] - \delta \bigg) \ge 1-\epsilon ,
	\end{equation}
	with
	\begin{equation}\label{epsilon-delta relation}
	    \epsilon= \exp{\bigg(-\frac{2N\delta^2}{\gamma^2}\bigg)} \, .
	\end{equation}
	 For a given $\epsilon$ of a DIQKD protocol,  one can calculate the confidence interval $ \delta $ for the Bell value using Eq.~(\ref{epsilon-delta relation}).

	\section{DIQKD model and protocol}\label{protocol}
	Let us state the DIQKD protocol. We consider the i.i.d. scenario, where the devices will behave independently and identically in each round. The state distributed between the parties is also the same for each round of the protocol. Alice has $m$ measurement inputs $ x \in \{1,\cdots,m\} $. Each of the inputs has $d$ corresponding outputs $a \in \{ 1,\cdots,d \}$. Bob instead has $m+1$ measurement inputs $ y \in \{1,\cdots,m+1\} $. Each measurement input of Bob also has $d$ outputs $b \in \{1,\cdots,d\}$.
	\begin{enumerate}
		\item In every round of the protocol, the parties do the following:
		\begin{itemize}
			\item A state $\rho_{AB}$ is distributed between Alice and Bob.
			\item There are two types of measurement rounds, namely raw key generation rounds and parameter estimation rounds. According to a preshared random key $ T $, Alice and Bob choose a random $T_i=\{0,1\}$ such that $\mathrm{Pr}(T_i=1)=\xi$. If $T_i=0$, Alice and Bob choose the measurement input $(x=1,y=m+1)$ to generate the raw key. Otherwise, Alice and Bob choose the measurement inputs  $x \in \{1,\cdots,m\}$ and $y \in \{1,\cdots,m\}$, respectively, uniformly at random. These cases will be denoted as parameter estimation rounds. 
			\item The parties record their inputs and outputs as $(x_i, y_i)$ and $(a_i,b_i)$. After N rounds of measurement, we denote the input bit strings as $ X^N $ and $ Y^N $, and output bit strings as $ A^N $ and $ B^N $ for Alice and Bob, respectively.
		\end{itemize}
		
		\item Alice and Bob publicly reveal their measurement outcomes of the parameter estimation rounds. They divide the parameter estimation rounds' data into three sets. From the first set, Alice and Bob estimate the frequencies $\hat{\textbf{P}}_1 =[\hat{P}(A_{x}^{a}B_{y}^{b})]$ (see Eq.~(\ref{Obs. Prob. dist.})). If $\hat{\textbf{P}}_1$ is inside the classical correlation polytope $\mathcal{P}$, the protocol aborts. Otherwise, they construct an optimal Bell inequality by solving the linear optimization in Eq.~(\ref{linear_optimization}). Then Alice and Bob use the data from the second set to calculate the Bell value $B[\hat{\textbf{P}}_2]$. They then bound  the deviation of this estimated Bell value $B[\hat{\textbf{P}}_2]$ from the real Bell value $ B[\textbf{P}] $ by (see  Eq.~(\ref{hoeffding ineq}))
		\begin{equation}\label{BellVal deviation ineq}
		\text{Pr}(B[\textbf{P}] \ge B[\hat{\textbf{P}}_2] -\delta_{est}) \ge 1 - \epsilon_{est} \, ,
		\end{equation}
		where $\epsilon_{est}=\exp{\bigg(-\frac{2N\xi\delta_{est}^2}{3\gamma^2}\bigg)}$ and $\frac{N\xi}{3}$ are the number of measurement rounds used to estimate the Bell value $B[\hat{\textbf{P}}_2]$.\\
		The parties will use the Bell inequality and corresponding violation $B[\hat{\textbf{P}}_2]-\delta_{est}$ as a hypothesis in the experiment. From the data of the third set, the parties calculate the Bell value $B[\hat{\textbf{P}}_3]$. For an honest implementation, the protocol aborts if the Bell value $B[\hat{\textbf{P}}_3]$ is smaller than  $B[\hat{\textbf{P}}_2]-\delta_{est}$. 
		
		\item  Furthermore the parties need to estimate the QBER $ Q $ to bound the error correction information. Alice and Bob publicly reveal the measurement outcomes from $ N\eta $ randomly sampled key generation rounds to estimate the QBER. The QBER of the raw-key can be upper bounded with high probability using the tail inequality (see Lemma \ref{QBER_lemma} in the Appendix):
		\begin{equation}\label{QBER deviation ineq}
		\text{Pr}[Q \ge \hat{Q}+\gamma_{est}(N(1-\xi-\eta),N\eta,\hat{Q},\epsilon_{est}^{\gamma})] > \epsilon_{est}^{\gamma} \, ,
		\end{equation}
		where $ \gamma_{est}(N(1-\xi-\eta),N\eta,\hat{Q},\epsilon_{est}^{\gamma}) $ is the positive root of the following equation:
		\begin{align}
		\ln{ \binom{N(1-\xi-\eta)\hat{Q}+N(1-\xi-\eta)\gamma_{est}}{N(1-\xi-\eta)} } \\ \nonumber
		+ \ln{\binom{N\eta\hat{Q}}{N\eta}} \\ \nonumber
		= \ln{\binom{(N(1-\xi)\hat{Q}+N(1-\xi-\eta)\gamma_{est}}{N(1-\xi)}} + \\ \nonumber
		\ln{\epsilon_{est}^{\gamma}} \, .
		\end{align}
		Thus we can deduce that the QBER $ Q $ is not larger than $ \hat{Q} + \gamma_{est} $ (estimated QBER + statistical correction) with very high probability of $ 1-\epsilon_{est}^{\gamma} $.
		
		\item Alice and Bob use an one-way error correction (EC) protocol to obtain identical raw keys $K_A$ and $K_B$ from their bit strings $A^N$ and $B^N$. During the process of error correction, Alice communicates $O_{EC}$ to Bob such that he can guess the outcomes $A^N$ of Alice. If EC aborts, they abort the protocol. In an honest implementation, this happens with probability at most $\epsilon_{EC}^c$. Otherwise, they obtain error corrected raw keys $K_A$ and $K_B$ \cite{grasselli2020quantum,beaudry2015assumptions,tomamichel2012tight,scarani2008security}. The probability that Alice and Bob do not abort but hold different raw keys $K_A \neq K_B$ is at most $\epsilon_{EC}$. For details, see Appendix \ref{Estimation of leak}.\\
		
		When the real QBER $ Q $ is greater than $ \hat{Q} + \gamma_{est} $ (which happens with probability $ \epsilon_{est}^{\gamma} $), the hashed values of keys belonging to Alice and Bob (which is sent from Alice to Bob to check if the error correction successful, see Appendix \ref{Estimation of leak} for details) are different with high probability \cite{grasselli2020quantum}. This results in the abortion of the implemented error correction protocol. Thus, we can upper bound the error correction abortion probability $\epsilon_{EC}^c$ by $ \epsilon_{est}^{\gamma} $.
		
		\item Alice and Bob apply a privacy amplification protocol to obtain a secure final key $\Tilde{K}_A = \Tilde{K}_B$ of length $l$ that is close to be uniformly random and independent of the adversary’s knowledge.
	\end{enumerate}
	
	\section{Secret key rate}\label{BellVal_skr}
	To provide a lower bound on the device-independent secret key rate, one has to estimate two terms. One is the conditional von Neumann entropy $ H(A|X,E) $ and the other one is the error correction information $ H(A|B) $ of the raw key \cite{devetak2005distillation}. To estimate the latter, one can follow the footsteps of \cite{arnon2019simple,murta2019towards}, the detailed derivation is shown in Appendix \ref{Security key analysis}. For the estimation of the conditional von Neumann entropy $ H(A|X,E) $, we lower bound it by the conditional min-entropy $ H_{min}(A|X,E)=-\log_{2}P_g(A|X,E) $ (see Eq.~(\ref{von-Neumann min-entropy reln})) \cite{konig2009operational}, where $ P_g(A|X,E) $ is Eve's guessing probability about Alice's $ X $-measurement results conditioned on her side information $ E $. $ P_g(A|X,E) $ can be upper bounded by a function $ G_x $ of the estimated Bell violation $B[\textbf{P}]$ \cite{masanes2011secure} by solving a semi-definite programme \cite{johnston2016qetlab} i.e 
	\begin{equation} \label{Guessing prob bound}
	    P_{g}(A|X,E) \leq G_x(B[\textbf{P}]) \, .
	\end{equation}
	In real-life experiments, one does not have access to the probabilities. Instead, one has to deal with the frequencies. In Sec. \ref{protocol}, we discussed that the protocol will abort if the observed Bell violation $ B[\hat{\textbf{P}}_3] $ in the hypothesis testing is smaller than $B[\hat{\textbf{P}}_2]-\delta_{est}$. We need to take into account that the observed Bell violation $ B[\hat{\textbf{P}}_3] $ is calculated from a finite number of rounds. To infer the real Bell violation of the i.i.d. implementation, we make use of Hoeffding’s inequality to define a confidence interval $\delta_{con}$, and the associated error probability $\epsilon_{con}$. We bound the probability of wrongly accepting the hypothesis with the error probability $\epsilon_{con}$ by:  
	\begin{align}
	\text{Pr}(B[\hat{\textbf{P}}_2]-\delta_{est} \ge B[\hat{\textbf{P}}_3]+\delta_{con} ) < \epsilon_{con}  \nonumber \\
	\Rightarrow \text{Pr}(B[\hat{\textbf{P}}_2]-\delta_{est}-\delta_{con} \ge B[\hat{\textbf{P}}_3] ) < \epsilon_{con} \ .
	\end{align}
	Therefore given that Alice and Bob do not abort the protocol, we infer that the Bell violation of the system under consideration is higher than $B[\hat{\textbf{P}}_2]-\delta_{est}-\delta_{con}$ (with maximun $\epsilon_{con}$ probability of error). We consider the worst possible scenario and use the Bell violation $B[\hat{\textbf{P}}_2]-\delta_{est}-\delta_{con}$ to upper bound the guessing probability $P_{g}(A|X,E)$ via a semi-definite programme 
	\begin{equation}\label{SDP}
	\begin{aligned}[t]
	& \underset{ \rho,\{A(a|x)\}, \{B(b|y)\}}{\text{max:}}\quad P_{g}(A|X,E)\\
	& \text{subject to:} \quad
	\mathrm{Tr}(\rho \mathcal{G})=B[\hat{\textbf{P}}_2]-\delta_{est}-\delta_{con} \, .
	\end{aligned}  
	\end{equation}
	The guessing probability $P_{g}(A|X,E)$ is bounded by using the NPA-hierarchy \cite{navascues2007bounding, navascues2008convergent} up to level 2 in the optimization problem of Eq.~(\ref{SDP}). The optimization is performed using standard tools YALMIP \cite{Lofberg2004}, CVX \cite{cvx,boyd2004convex,gb08}, Ncpol2sdpa \cite{wittek2015algorithm} and QETLAB \cite{qetlab}. $\mathcal{G}$ is the Bell operator defined as
	\begin{align*}
	\mathcal{G} &= \sum_{a ,b , x , y }
	h_{A_xB_y}^{ab}A(a|x)B(b|y) \ .
	\end{align*}
	$A(a|x)$ and $B(b|y)$ are measurement operators for Alice and Bob, respectively, and  $\rho$ is the state shared between Alice and Bob. Hence the conditional von Neumann entropy $H_{\text{min}}(A|XYE,T=1)$ can be bounded by
	\begin{align} \label{min-entropy bound}
	H_{\text{min}}(A|XYE,T=1)_{\rho} & = -\log_{2}P(A|X,E) \\ \nonumber
	& \ge -\log_{2} G_{x} (B[\hat{\textbf{P}}_2]-\delta_{est}-\delta_{con})	\, 	.			
	\end{align}
	The function $G$ is defined in Eq.~(\ref{Guessing prob bound}). T=1 specifies that the outcomes of the parameter estimation rounds which are used for the estimation of the min-entropy. \\
    To bound the error correction information, we need to estimate the QBER $ Q $, i.e. the probability that Alice's and Bob's measurement outcomes in the key generation rounds differ. In Sec. \ref{protocol}, we have discussed that we can upper bound the QBER $ Q $ of the raw key with at least $1-\epsilon^{\gamma}_{est}$ probability by $ \hat{Q} + \gamma_{est} $. In Appendix \ref{Security key analysis}, we show that we can upper bound the von Neumann entropy $H(A \arrowvert B)$ \cite{pironio2009device,grasselli2020quantum}:
	\begin{equation}\label{QBER bound main text}
	H(A|B) \le f(\hat{Q} + \gamma_{est}) \, ,
	\end{equation}
	where $f(x)=h(x)+x\log_{2}(d-1)$. Here, $ d $ is the number of outcomes per measurement in the Bell scenario  \cite{bradler2016finite} and $ h $ is the binary entropy function.\\
	Using the bound on the min-entropy (see Eq.~(\ref{min-entropy bound})) and the QBER (see Eq.~(\ref{QBER bound main text})), we derive the finite-size secret key rate of a $\epsilon_{DIQKD}^s$-sound, $\epsilon_{DIQKD}^c$-complete (see Def. \ref{security defn} and  Appendix \ref{Security key analysis} for details) DIQKD protocol for collective attacks. The statement is as follows \cite{murta2019towards}: Either the protocol in Sec. \ref{protocol} aborts with probability higher than $1-(\epsilon_{con}+\epsilon^{c}_{EC})$ or an $(2\epsilon_{EC} + \epsilon_{s} + \epsilon_{PA})$)-correct-and-secret key of length
	\begin{align} \label{final SKR expression}
	l &\leqslant N( -\log_{2} G_x (B[\hat{\textbf{P}}_2]-\delta_{est}-\delta_{con}) - \\ \nonumber
	 &\quad (1-\xi-\eta)f(\hat{Q} + \gamma_{est})+(\xi+\eta) \log_{2}d )- \\ \nonumber
	 &\quad \sqrt{N}\bigg( 4\log_{2}\big(2\sqrt{2^{\log_{2} d}}+1 \big) \bigg(\sqrt{\log_{2}\frac{8}{{\epsilon'}^{2}_{EC}}}+\sqrt{\log_{2}\frac{2}{\epsilon_s^2}}\bigg)\bigg)  \\ \nonumber
	 &\quad -\log_{2} \bigg(\frac{8}{{\epsilon'}^{2}_{EC}} + \frac{2}{2-{\epsilon'}_{EC}}\bigg)-\log_{2} \frac{1}{\epsilon_{EC}}-2\log_{2}\frac{1}{2\epsilon_{PA}} \, ,
	\end{align}
	can be generated where $\epsilon_{DIQKD}^c \leq \epsilon_{est}+\epsilon^{\gamma}_{est}$ (for an honest implementation) and $\epsilon_{DIQKD}^s \leq 2\epsilon_{EC} + \epsilon_{s} + \epsilon_{PA}$. The expression in Eq.~(\ref{final SKR expression}) is derived in Appendix \ref{Security key analysis}. Table. \ref{DIQKD_protocol_parameter} lists all parameters of the DIQKD protocol. 
	\begin{table}
		\centering
		\begin{tabular}{|c| p{5cm} |} 
			\hline
			$ N $ & number of measurement rounds in the protocol \\
			\hline
			$ \xi $ & fraction of parameter estimation rounds for estimating the Bell violation \\
			\hline
			$ \eta $ & fraction of measurement rounds for estimating the QBER \\
			\hline
			$\epsilon_{s}$ & smoothing parameter \\
			\hline
			$ \epsilon_{EC} $, $ \epsilon'_{EC} $ & error probabilities of the error correction protocol \\ 
			\hline
			$ \epsilon_{EC}^{c} $ & probability of abortion of error correction protocol\\
			\hline
			$ \delta_{est} $ & width of the statistical interval for the Bell violation hypothesis test \\
			\hline
			$ \epsilon_{est} $ & error probability of the Bell violation hypothesis test \\
			\hline
			$ \delta_{con} $ & confidence interval for the Bell test \\
			\hline
			$ \epsilon_{con} $ & error probability of the Bell violation estimation \\
			\hline
			$ \gamma_{est} $ & width of the statistical interval for the QBER estimation \\
			\hline
			$ \epsilon^{\gamma}_{est} $ & error probability of the QBER estimation \\
			\hline
			$ \epsilon_{PA} $ & error probability of the privacy amplification protocol \\
			\hline	
			$ \epsilon_{DIQKD}^c $ & completeness parameter of the DIQKD protocol \\
			\hline	
			$ \epsilon_{DIQKD}^s $ & soundness parameter of the DIQKD protocol \\
			\hline	
		\end{tabular}
		\caption{ Parameters of the DIQKD protocol }
		\label{DIQKD_protocol_parameter}
	\end{table}	
	
	\section{Results}\label{Results}
	In this section, we illustrate the potential and the versatility of our method with examples. We choose $\epsilon_{DIQKD}^c=10^{-2} $, $\epsilon_{DIQKD}^s=10^{-5} $, $\epsilon_{EC}=10^{-10}$ as DIQKD parameters for all the examples showed in the following section.
	\subsection{ Scenario of $m$ measurements each, 2 outcomes }\label{Results 2m2} 
	We present the scenario with $m$ measurement settings for Alice and $m+1$ for Bob (where the outcomes of only $m$ measurement settings are used in the parameter estimation). Each of those measurement settings has 2 possible outcomes. Let the shared state between Alice and Bob be a maximally entangled Bell state $\arrowvert \psi \rangle = \frac{1}{\sqrt{2}}(\arrowvert 00 \rangle + \arrowvert 11 \rangle) $, mixed with white noise of probability $p$, i.e.
	\begin{equation}\label{BellState}
	\rho=(1-p)\arrowvert \psi \rangle \langle \psi \arrowvert+p\frac{\mathbb{1}}{4} \ ,
	\end{equation}
	with $ p \in [0,1] $. Both parties use $ \sigma_z $ as key generation measurements, resulting in the maximal possible correlation between the outcomes of Alice and Bob. \\ 
	In the case of $ m=2 $, consider the  measurement settings of Alice and Bob that maximally violate the CHSH inequality \cite{clauser1969proposed}, i.e.
	\begin{equation}\label{CHSH settings}
	\begin{aligned}
	x &=1 \Rightarrow \sigma_z \, , & x &=2 \Rightarrow \sigma_x \, , \\
	y&=1 \Rightarrow \frac{\sigma_z+\sigma_x}{\sqrt{2}} \, , & y &=2 \Rightarrow \frac{\sigma_z-\sigma_x}{\sqrt{2}} \, .
	\end{aligned}
	\end{equation}
	For the CHSH settings with different values of white noise $p$, we recover the stable hyperplane stated in Table. \ref{CHSH facet}. 
	
	\begin{table}[H]
		\centering
		\begin{tabular}{ c  c | c  c}
			1 & -1 & 1 & -1 \\
			-1 & 1 & -1 & 1 \\
			\hline
			1 & -1 & -1 & 1  \\
			-1 & 1 & 1 & -1 
		\end{tabular}
		\caption{Optimized Bell inequality for the measurement settings in Eq.~(\ref{CHSH settings}),
		performed on a Bell state. 
		Here, the entries of the hyperplane vector, see Eq.~(\ref{hyperplane}), are given in a tabular form. For their explicit ordering see Appendix  \ref{APP CG notation}.}
		\label{CHSH facet}
	\end{table}
	The secret key rate as a function of the number of measurement rounds for different values of white noise $ p $ is shown in Fig. \ref{CHSH_finitekey_plot}. The key rate generated by our method coincides with Ref.\cite{masanes2011secure} that uses a predetermined standard CHSH inequality.\\ 
	\begin{figure}[h]
		\includegraphics[height=7cm, width=9cm]{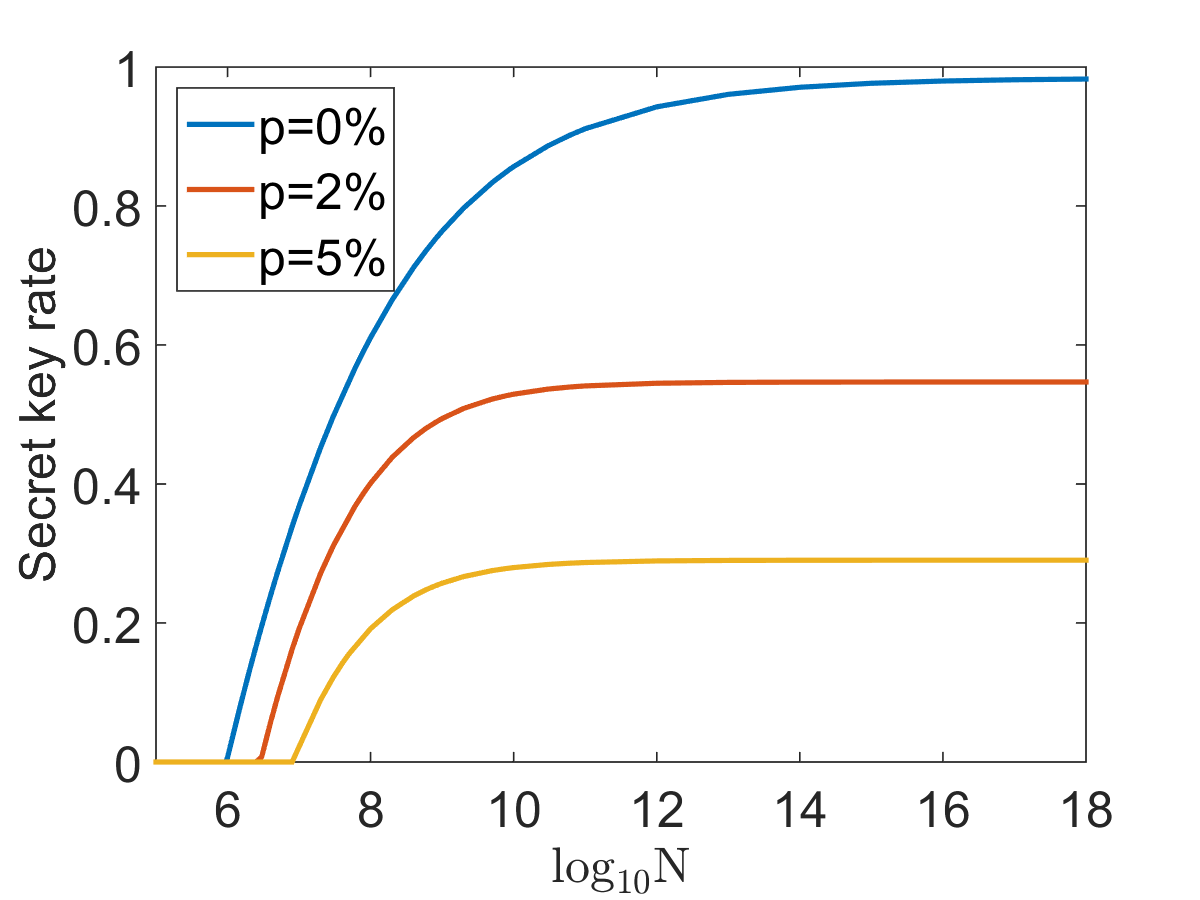}
		\caption{Secret key rate vs logarithm of the number of rounds N using the measurement settings of  Eq.~(\ref{CHSH settings}). The state shared between two parties is the noisy Bell state (defined in Eq.~(\ref{BellState})), where the noise is taken to be $p$ = 0.0 (blue), $p$ = 0.02 (red), $p$ = 0.05 (yellow). }
		\label{CHSH_finitekey_plot}
	\end{figure}
		
	In Ref.\cite{nieto2014using,bancal2014more}, the authors introduced an approach of bounding the device-independent secret key rate (DISKR) directly, by using the measurement data. In the asymptotic regime, this corresponds to using a Bell inequality that leads to the maximal DISKR for the precise setup. However, small changes in the parameters (e.g. imperfections on the measurement directions) or on the measured probability distribution may lead to different Bell inequalities corresponding to optimal secret key rate. We compare our method with Ref.\cite{nieto2014using,bancal2014more} in the finite key regime. We study two different Bell scenarios. For the $[2,2]$ scenario, we consider the CHSH settings (see Eq.~(\ref{CHSH settings})) and the noisy Bell state of Eq.~(\ref{BellState}) with $p=0$ (see graph (a) of Fig. \ref{2step_vs_fullprob_finitesize}) and $p=0.02$ (see graph (b) of Fig. \ref{2step_vs_fullprob_finitesize}). For the $[3,2]$ scenario (3 measurement settings each, 2 outcomes per measurement), we consider the setting: 
	\begin{equation}\label{CHAIN3 settings}
	\begin{aligned}
	x &=1 \Rightarrow \sigma_z \, , \\
	x &=2 \Rightarrow \sin{\frac{\pi}{3}}\sigma_x + \cos{\frac{\pi}{3}}\sigma_z \, , \\
	x &=3 \Rightarrow \sin{\frac{2\pi}{3}}\sigma_x + \cos{\frac{2\pi}{3}}\sigma_z \, , \\
	y &=1 \Rightarrow \sin{\frac{\pi}{6}}\sigma_x + \cos{\frac{\pi}{6}}\sigma_z \, , \\
	y &=2 \Rightarrow \sigma_x \, , \\ 
	y &=3 \Rightarrow \sin{\frac{5\pi}{6}}\sigma_x + \cos{\frac{5\pi}{6}}\sigma_z \, ,
	\end{aligned}
	\end{equation} 
	and use the noisy Bell state (Eq.~(\ref{BellState})) with $p=0, p=0.02$ (see graph (c) and (d) of Fig. \ref{2step_vs_fullprob_finitesize}).

	\begin{figure}[h]
		\centering
		\includegraphics[height=7cm, width=9cm]{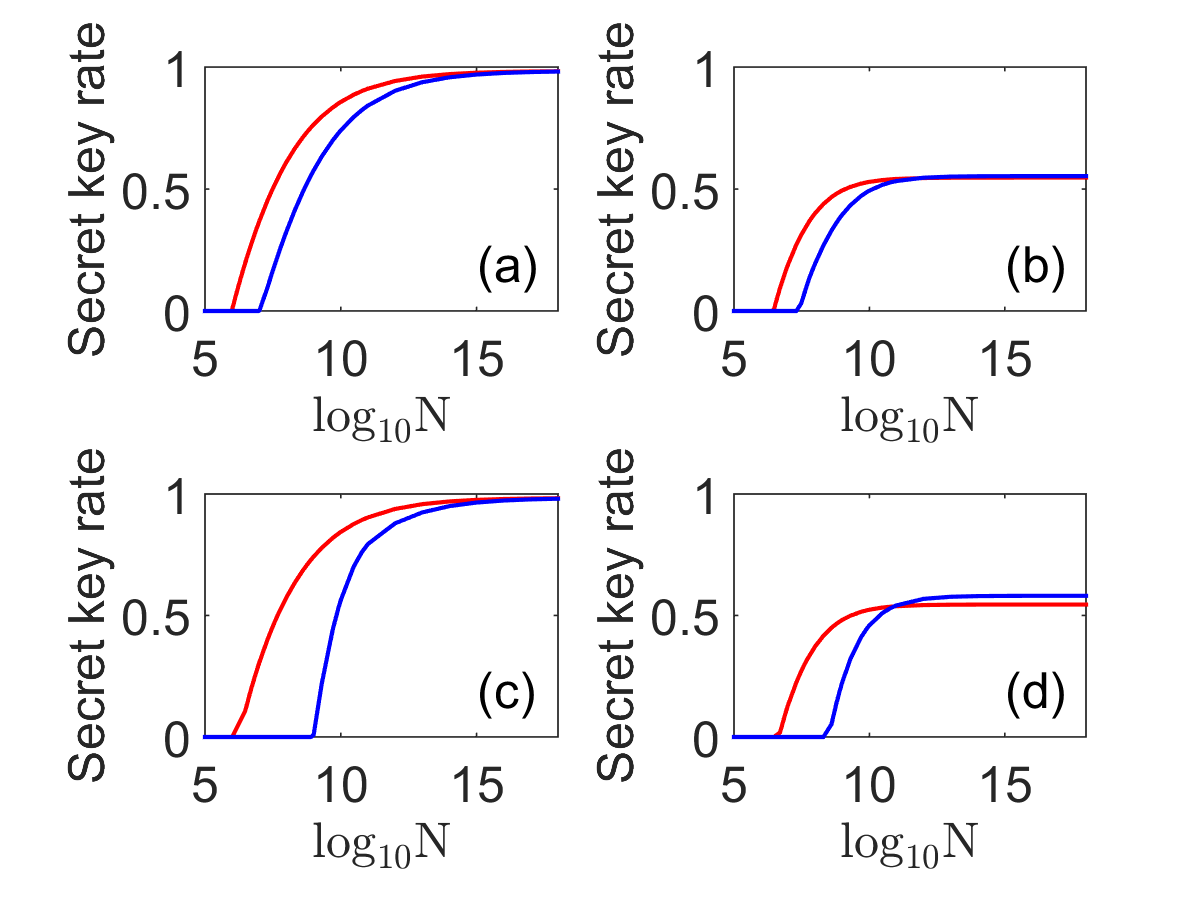}
		\caption{Achievable secret key rate as function of the number of measurement rounds N, comparing our method (red) and the method of Ref.~\cite{nieto2014using,bancal2014more} (blue), for
		a noisy Bell state with noise parameter $p$, see Eq.~(\ref{BellState}). Upper row:
		Measurement settings of  Eq.~(\ref{CHSH settings}), for (a)  $p=0$ and (b) $p=0.02$. 
		Lower row: Measurement settings of Eq.~(\ref{CHAIN3 settings}), 
		for (c)  $p=0$ and (d) $p=0.02$. }
		\label{2step_vs_fullprob_finitesize}
	\end{figure}
	To analyze the robustness, we incorporate fluctuations $\theta$ in the orientations in some measurement settings of Eq.~(\ref{CHAIN3 settings}) such that
	\begin{equation}\label{Tilted CHAIN3 settings}
	\begin{aligned}
	x &=1 \Rightarrow \sigma_z \, , \\
	x &=2 \Rightarrow \sin{(\frac{\pi}{3}-\theta)}\sigma_x + \cos{(\frac{\pi}{3}-\theta)}\sigma_z \, , \\
	x &=3 \Rightarrow \sin{(\frac{2\pi}{3}+\theta)}\sigma_x + \cos{(\frac{2\pi}{3}+\theta)}\sigma_z \, , \\
	y &=1 \Rightarrow \sin{(\frac{\pi}{6}+\theta)}\sigma_x + \cos{(\frac{\pi}{6}+\theta)}\sigma_z \, , \\
	y &=2 \Rightarrow \sigma_x \, , \\ 
	y &=3 \Rightarrow \sin{(\frac{5\pi}{6}-\theta)}\sigma_x + \cos{(\frac{5\pi}{6}-\theta)}\sigma_z \, .
	\end{aligned}
	\end{equation} 
	
	We use a noisy Bell state with $p=0.02$ (see Eq.~(\ref{BellState})) as the shared state between Alice and Bob. We use two approaches to compare the robustness of our method with Ref.\cite{nieto2014using,bancal2014more}. First, we set $\theta=\frac{\pi}{60}$ (see Eq.~(\ref{Tilted CHAIN3 settings})) and vary the number $N$ of measurement rounds (see (a) of Fig. \ref{2step_vs_fullprob_tiltedCHAIN3_finitesize}). Next we compare the methods for a range of deviations $\theta$ for $N=10^{10}$  measurement rounds (see (b) of Fig. \ref{2step_vs_fullprob_tiltedCHAIN3_finitesize}).
	
	We observe that the Bell inequality derived from our approach is stable against small fluctuations of the measurement directions or in the shared state. Our method can also generate a non-zero secret key by performing fewer measurement rounds in comparison with Ref.\cite{nieto2014using,bancal2014more} (see Fig. \ref{2step_vs_fullprob_finitesize} and Fig. \ref{2step_vs_fullprob_tiltedCHAIN3_finitesize}.
	This is because the effect of statistical corrections in the Bell inequality violation (see Eq.~(\ref{BellVal deviation ineq})) is smaller in our approach. These statistical corrections become insignificant for a high number of measurement rounds, such that the method of Ref.\cite{nieto2014using,bancal2014more} yields a higher secret key in the asymptotic regime.

	\begin{figure}%
    \centering
    \subfloat[\centering]{{\includegraphics[height=7cm,width=9cm]{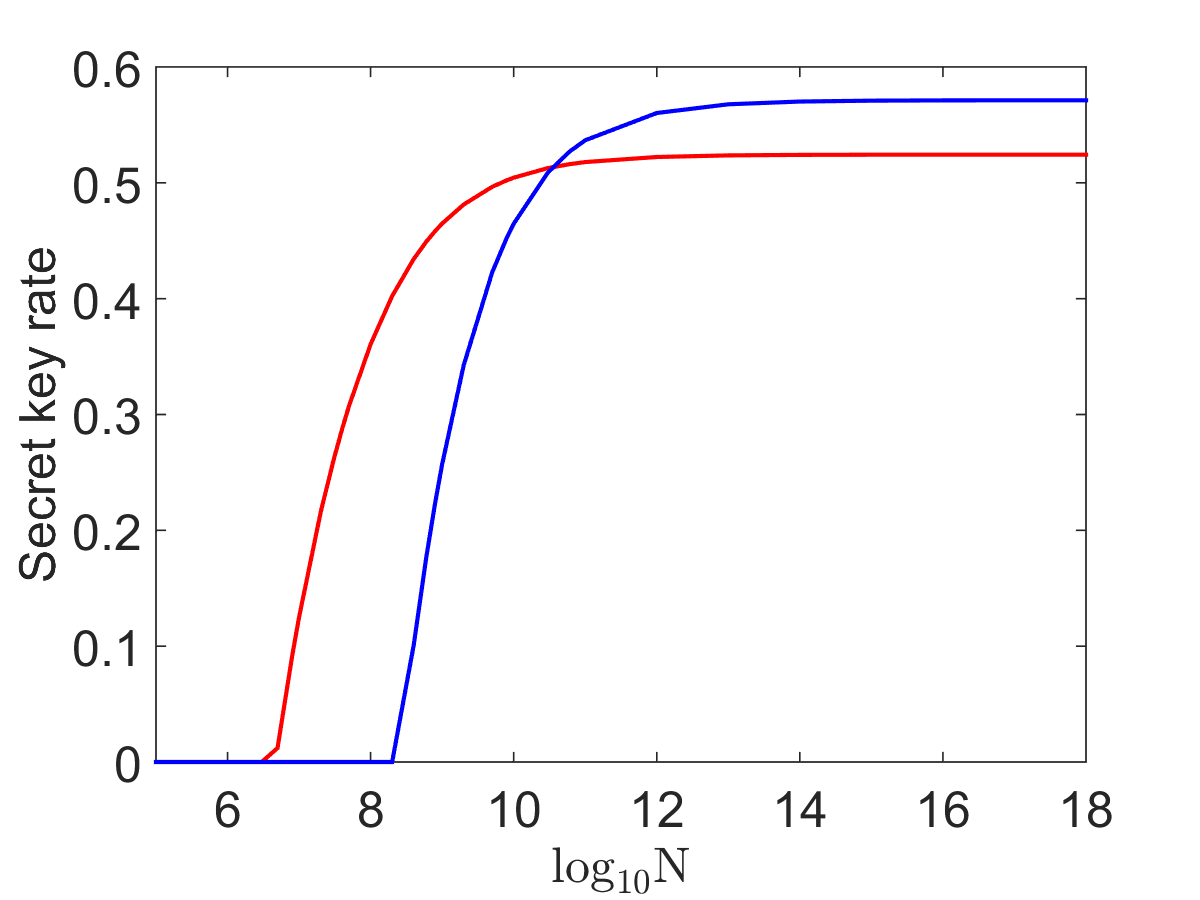} }}%
    \qquad
    \subfloat[\centering]{{\includegraphics[height=7cm, width=9cm]{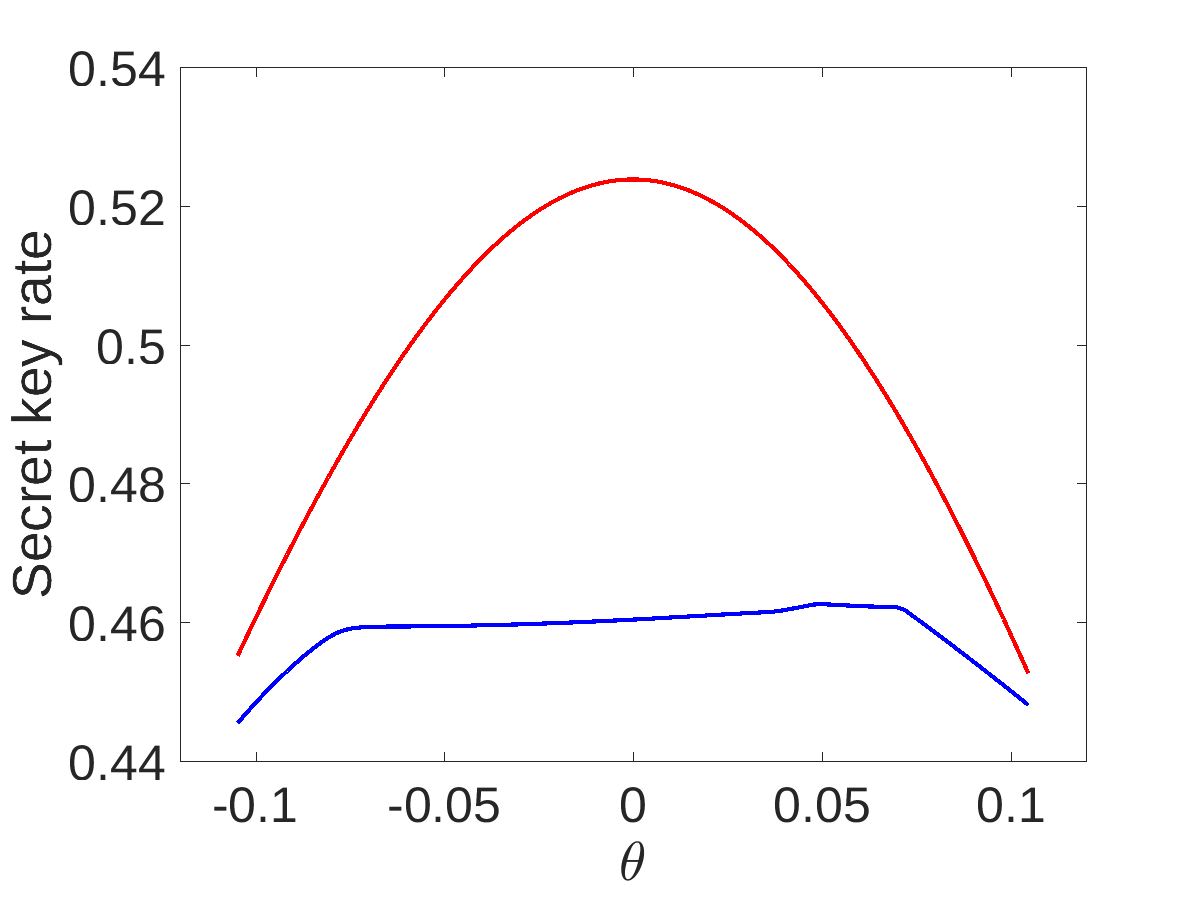} }}%
    \caption{Deviation of measurement direction: \\ (a):
	secret key rate vs logarithm of the number of
    rounds $N$ for our method (red) and the method of Ref.~\cite{nieto2014using,bancal2014more} (blue), with measurement settings of Eq.~(\ref{Tilted CHAIN3 settings}) where $\theta=\frac{\pi}{60}$, using a noisy Bell state with $p=0.02$ (see Eq.~(\ref{BellState})). (b): secret key rate vs deviation $\theta$ of the measurement settings in Eq.~(\ref{Tilted CHAIN3 settings}) for our method (red) and the method of Ref.~\cite{nieto2014using,bancal2014more} (blue), with $N=10^{10}$, using a noisy Bell state with $p=0.02$ (see Eq.~(\ref{BellState})).}
    \label{2step_vs_fullprob_tiltedCHAIN3_finitesize}%
    \end{figure}

	\begin{figure}[h]
		\includegraphics[height=7cm, width=9cm]{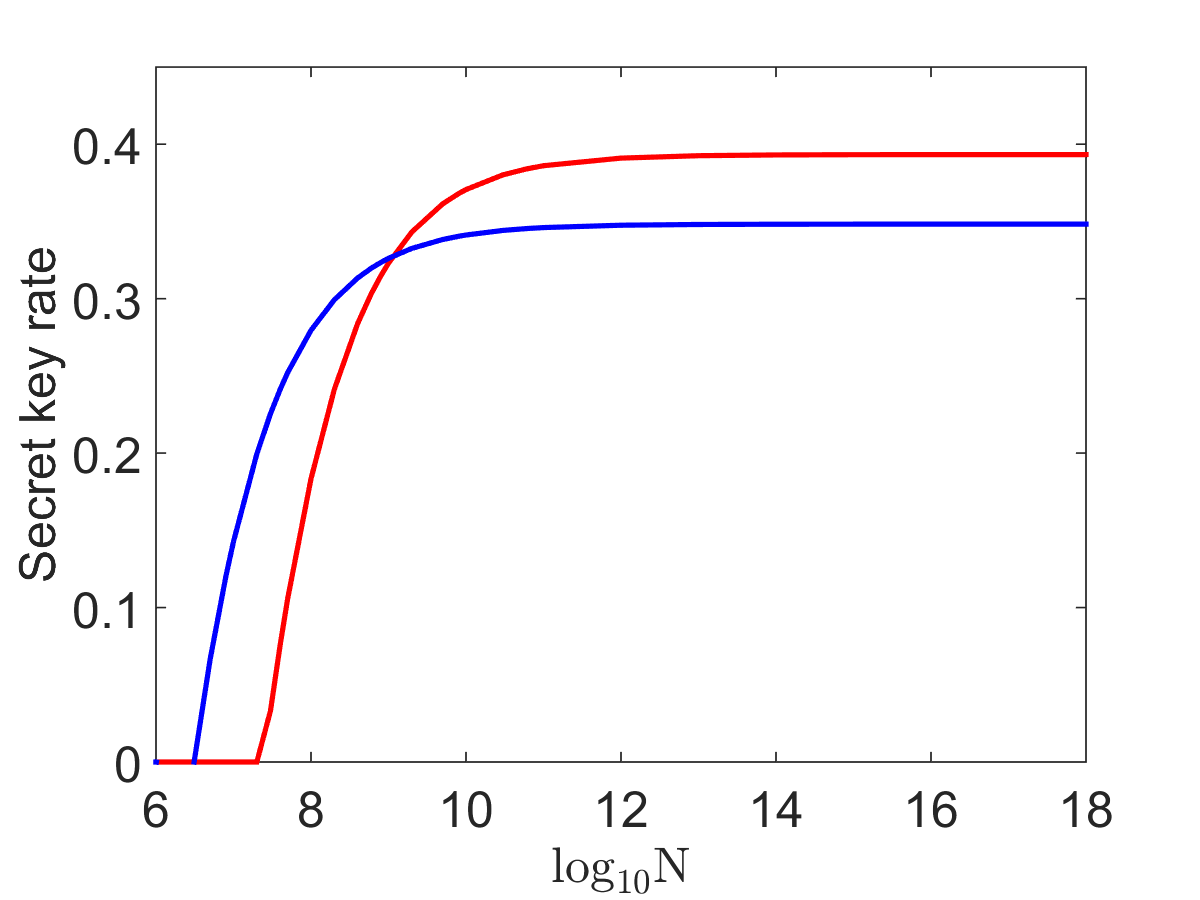}
		\caption{Improvement for more than 2 measurement settings: Secret key rate vs  $\log N$ for our method with measurement settings as given in Eq.~(\ref{3-meas settings_1 appendix}) (red), compared to the optimal subset of two measurement settings per party (blue).
		Here, the secret key rate for any subset of two measurement settings per party is calculated via the analytical expression in \cite{pironio2009device}, using the CHSH inequality. The shared state is a Bell state.}
		\label{optBIvschsh_plot}
	\end{figure} 
	We point out that our method can also have advantages w.r.t. the CHSH scenario, when the DI secret key rate is calculated via the analytical expression from Ref.~\cite{pironio2009device}: if non-optimal measurement settings were used, we can increase the key rate by employing additional measurement settings. As an example, we consider the  observed probability distribution originating from the maximally entangled Bell state  and the set of measurement settings listed explicitly in the Appendix, see Eq.~(\ref{3-meas settings_1 appendix}). With our method we can generate a higher secret key rate (for certain $N$) than using any subset of two measurement settings per party (and the analytical expression of \cite{pironio2009device}). See Fig. \ref{optBIvschsh_plot} for an illustration.
	
	If the probability distribution obtained by two non-optimal measurement settings per party does not lead to a non-zero secret key, adding another measurement setting per party and employing our strategy can be advantageous: For example, with non-optimal measurement settings in Eq.~(\ref{3-meas settings_2 appendix}) and the maximally entangled Bell state, one cannot extract a secret key, using our method or blindly using the CHSH inequality. By adding another set of measurements for Alice and Bob, as shown in Eq.~(\ref{add measurement appendix}), our method leads to a non-zero secret key rate.
	
	\subsection{Scenario of 2 measurements each, $d$ outcomes} \label{22d scenario}
	In this subsection, we analyse the scenario where each party has 2 measurement settings in the parameter estimation rounds (Bob has an additional measurement setting which will be used in key generation rounds), and each measurement has $d$ outcomes. 
	 The state shared between Alice and Bob is a maximally entangled state of two qudits, 
	 i.e. $\arrowvert \psi \rangle=\sum_{i=0}^{d-1}\frac{1}{\sqrt{d}}\arrowvert ii \rangle$, which is affected by white noise with probability $p$, i.e.
	\begin{equation}\label{maxentquditstate}
	\rho=(1-p)\arrowvert \psi\rangle \langle \psi \arrowvert+p\frac{\mathbb{1}}{d^2} \, .
	\end{equation}
	We consider the measurement settings from Ref.~\cite{acin2002quantum, collins2002bell}. 
	The measurement is carried out in 3 steps. In the first step Alice applies a unitary operation on her subsystem with only non-zero terms in the diagonal equal to $e^{\iota\Vec{\phi_{x}}(j)}$, where $x$ denotes Alice's measurement direction, i.e. $x \in \{1,2\}$, and $j=0,1,2,\cdots,d-1$. Similarly Bob applies a unitary operation on his subsystem with only non-zero terms in the diagonal equal to $e^{\iota\Vec{\varphi_{y}}(j)}$, where $y$ denotes Bob's measurement direction, i.e. $y \in \{1,2,3\}$.  These unitary operations are denoted by $U(\Vec{\phi}_{x})$ and $U(\Vec{\varphi}_{y})$ for Alice and Bob, respectively, where
	\begin{align*}
	\Vec{\phi}_{x}\equiv[\phi_{x}(0),\phi_{x}(1),\phi_{x}(2),\cdots,\phi_{x}(d-1)] \, ,  \\
	\Vec{\varphi}_{y}\equiv[\varphi_{y}(0),\varphi_{y}(1),\varphi_{y}(2),\cdots,\varphi_{y}(d-1)] \, .
	\end{align*}
	The values of these phases are chosen as
	\begin{equation}\label{CGLMP settings}
	\begin{aligned}
	\phi_1(j) &= 0 \, ,  & \phi_2(j) &= \frac{\pi}{d}j \, , \\
	\varphi_1(j) &= \frac{\pi}{2d}j \, , & \varphi_2(j) &= -\frac{\pi}{2d}j \, , & \varphi_3(j) &= 0 \, , 
	\end{aligned}
	\end{equation}
	with $j=0,1,2,\cdots,d-1$. We use $\{x=1,y=3\}$ for the key generation rounds and $\{x \in (1,2), \quad  y \in (1,2)\}$ for the parameter estimation rounds. The second step consists of Alice carrying out a discrete Fourier transform $U_{FT}$ and Bob applying $U^{*}_{FT}$. The matrix elements of the Fourier transform are defined as $(U_{FT})_{jk}$=$\exp{[(j-1)(k-1)2\pi \iota/d]}$, $(U^{*}_{FT})_{jk}$=$\exp{[-(j-1)(k-1)2\pi \iota/d]}$. Thus the concatenated unitaries for Alice and Bob are
	$V(\Vec{\phi}_{x}) \equiv U_{FT} \, U(\Vec{\phi}_{x})$ and
	$V(\Vec{\varphi}_{y}) \equiv U^{*}_{FT} \, U(\Vec{\phi}_{y})$,  
	respectively. 
	
	Finally, Alice and Bob carry out measurements in the computational basis $\arrowvert i  \rangle$.
	For $ d=3 $, we find via linear optimization, see Eq.~(\ref{linear_optimization}), the optimized Bell inequality as shown in Table~\ref{BI_CGLMP3}. The details of this representation of the Bell inequality are explained in Table~\ref{CGLMP hyperplane table [2,3]} of Appendix  \ref{APP CG notation}.
	\begin{table}[H]
		\centering
		\begin{tabular}{c c c | c c c}
			1 & -1 & 0 & -1 & 1 & 0 \\
			0 & 1 & -1 & 0 & -1 & 1 \\
			-1 & 0 & 1 & 1 & 0 & -1 \\
			\hline
			1 & 0 & -1 & 1 & -1 & 0  \\
			-1 & 1 & 0 & 0 & 1 & -1 \\
			0 & -1 & 1 & -1 & 0 & 1
		\end{tabular}
		\caption{Optimized Bell inequality 
		for the measurement described in the text, performed on a maximally entangled state of two qutrits. Here, the entries of the hyperplane vector, see Eq.~(\ref{hyperplane}), are given in a tabular form. For their explicit ordering see Appendix  \ref{APP CG notation}.
		}.
		\label{BI_CGLMP3}
	\end{table}
	The hyperplane in Table \ref{BI_CGLMP3} is equivalent to the CGLMP inequality \cite{collins2002bell, collins2004relevant}. If the parties share the non-maximally entangled state
	\begin{equation}\label{nonmaxentquditstate3}
	\arrowvert \Phi \rangle \equiv \frac{\arrowvert 00 \rangle+ 0.7923\arrowvert 11 \rangle+\arrowvert 22 \rangle}{\sqrt{2+0.7923^2}} \, ,
	\end{equation} 
	the CGLMP inequality is maximally violated, thus resulting in a significantly higher secret key rate, as  shown in Fig.~\ref{CGLMP3_ent_nonent_finitesize}. 
	This trend of generating a higher secret key rate using non-maximally entangled states is also observed for higher dimensions (i.e. $ d > 3 $). 
	\begin{figure}[htb]
		\includegraphics[height=7cm, width=9cm]{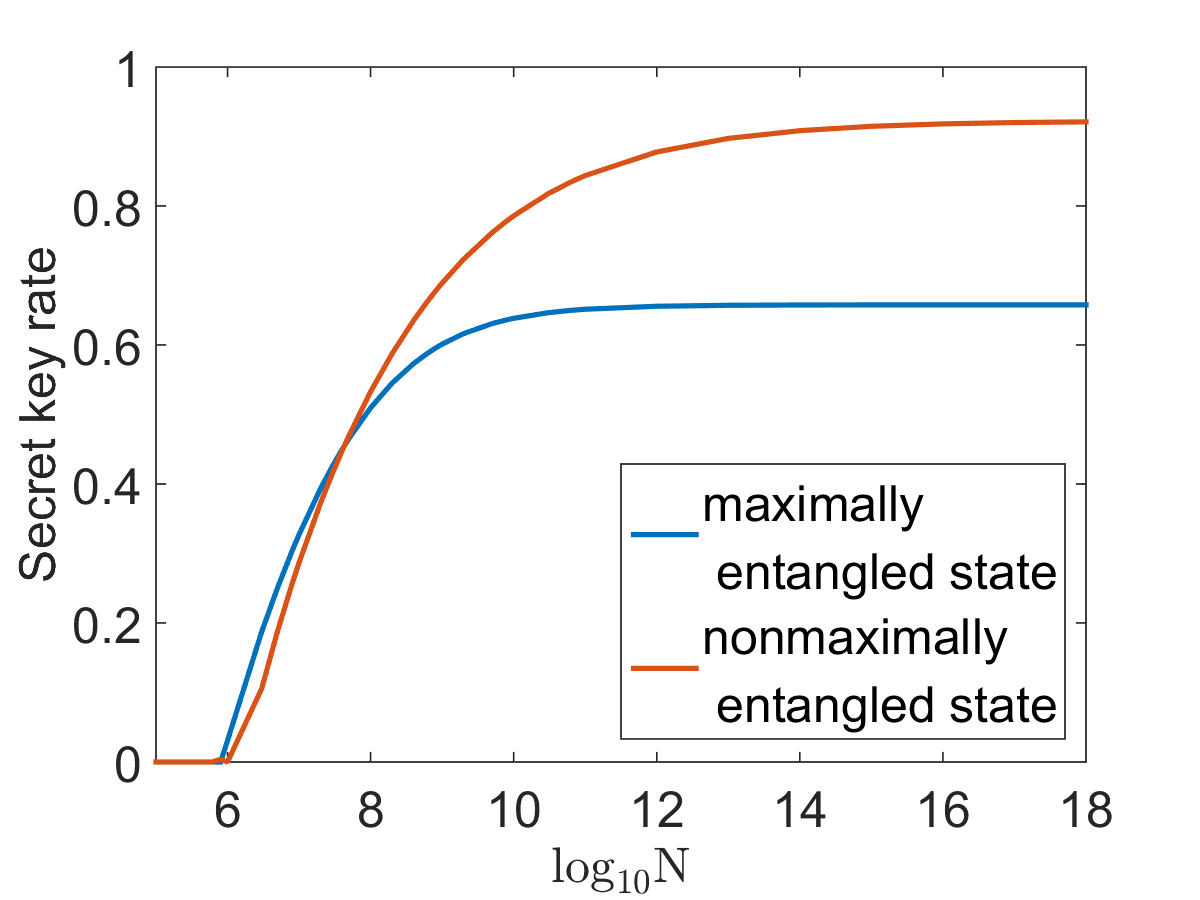}
		\caption{Secret key rate vs  $\log  N $, when performing the measurement described around Eq.~(\ref{CGLMP settings})
		on a maximally entangled state of two qutrits
		(blue) and on the non-maximally entangled state given in Eq.~(\ref{nonmaxentquditstate3}) (red).}
		\label{CGLMP3_ent_nonent_finitesize}	
	\end{figure} \\
	Note that in this scenario with $ d $ outcomes
	the maximum secret key rate is $\log_{2}d$. For a fair comparison, we have normalized the min-entropy (i.e. $-\log_{2}$ of the solution of the optimization problem of Eq.~(\ref{SDP})) by division with $\log_{2}d$ to get a rate per qubit dimension.
	\begin{figure}[htb]
		\includegraphics[height=7cm, width=9cm]{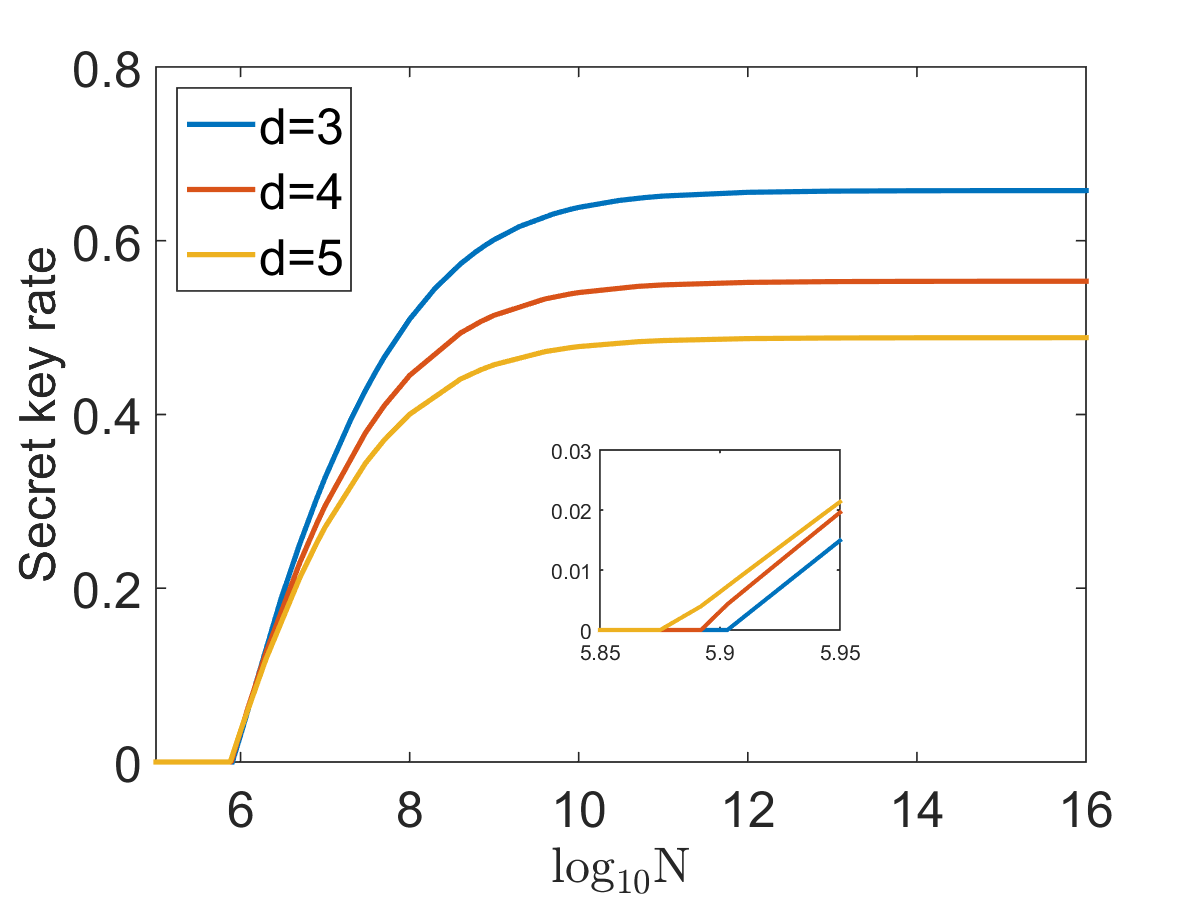}
		\caption{
		Secret key rate vs  $\log  N $, when performing the measurement described around Eq.~(\ref{CGLMP settings}) for $ d =3, 4, 5 $
		on a maximally entangled state of two $d$-dimensional subsystems.
		The inset graph shows a zoomed version in the region of low number of measurement rounds, demonstrating the cross-over of the curves.}
		\label{CGLMP_finitekey_comparison}	
	\end{figure}
	
	Comparing the DIQKD protocol with measurement settings as described around Eq.~(\ref{CGLMP settings}) for different $ d $ and the corresponding $ d $-dimensional maximally entangled state, see Eq.~(\ref{maxentquditstate}), the minimum number of measurement rounds required to have a non-zero secret key rate decreases slightly with increasing $ d $, see  Fig. \ref{CGLMP_finitekey_comparison}. This follows from the fact that the minimum number of measurement rounds required to have a non-zero Bell violation decreases with increasing $ d $. On the other side, the secret key is decreasing with increasing $ d $ (see Fig. \ref{CGLMP_finitekey_comparison}) when the number of measurement rounds is sufficiently high. The nonlocality of the resultant correlation is decreasing with increasing $ d $, which in turn results in the lower secret key.
	
	\subsection{Random measurement settings}
	 In this subsection, we analyse the case when Alice's and Bob's devices perform random measurements. We specifically focus on the fraction of events that leads to a non-zero secret key rate. First consider the $[m,2]$ scenario, i.e. $m$ measurement each, with 2 outcomes. The state shared between the parties is the noisy  Bell state as in  Eq.~(\ref{BellState}). We choose the raw key generation measurement operators $\{ x=1 \Rightarrow \sigma_z,  y=m+1 \Rightarrow \sigma_z \}$, in order to achieve correlated outcomes in the key measurement rounds and consequently have to exchange less error correction information. The remaining measurement operators are chosen randomly. Alice and Bob perform general unitary operators
	\begin{equation}
	U(\phi,\psi,\chi) = \begin{bmatrix}
	e^{\iota \psi}\cos{\phi} & e^{\iota \chi}\sin{\phi}\\
	-e^{-\iota \chi}\sin{\phi} & e^{-\iota \psi}\cos{\phi}
	\end{bmatrix}
	\end{equation}
	with parameters $ \psi,\chi \in [0,2\pi] $ and $ \phi \in [0,\frac{\pi}{2}] $ and then measure in the computational basis $ \{|0 \rangle, |1 \rangle \} $. This strategy is equivalent to choosing a random measurement.
	In Table \ref{Random measurement table}, we show the fraction of events that leads to a non-zero secret key rate  with random measurements. The statistics are based on $ 10^5 $ realizations. For the $[2,2]$ scenario, the optimization in Eq.~(\ref{linear_optimization}) will always lead to the CHSH inequality. Adding another measurement setting per party (i.e. the [3,2] scenario) significantly increases the probability of finding a hyperplane that produces a non-zero secret key rate. The first explanation of this fact is statistical. By increasing the number of settings, we increase the probability that some of them violate a Bell inequality even involving only two settings per party. Apart from that, the optimization in Eq.~(\ref{linear_optimization}) also provides some hyperplanes for the $[3,2]$ scenario that are independent of the hyperplanes for the $[2,2]$ scenario.
	
	\begin{table}[H]
		\centering
		\begin{tabular}{| c | c | c |}
			\hline
			& ($ m_a $,$ m_b $)=2 & ($ m_a $,$ m_b $)=3 \\
			\hline \hline
			p=0 \% & $\sim$ 28.6\% & $\sim$ 53.4\%  \\
			\hline
			p=1 \% & $\sim$ 18.3\% & $\sim$ 46.5\% \\
			\hline
			p=2 \% & $\sim$ 10.8\% & $\sim$ 36.8\%\\
			\hline
			p=3 \% & $\sim$ 6.4\% & $\sim$ 28.2\%\\
			\hline
			p=4 \% & $\sim$ 3.9\% & $\sim$ 18.5\%\\
			\hline
			p=5 \% & $\sim$ 2.2\% & $\sim$ 11.3\%\\
			\hline
		\end{tabular}
		\caption{Approximate probability of achieving a non-zero secret key rate in the $[m,2]$ scenario, for different white noise levels $p$ in the noisy  Bell state (see Eq.~(\ref{BellState})). The statistics are taken over $10^5$ realizations. Measurement settings of key generation rounds are fixed to be $\sigma_z$ for Alice and Bob. The remaining measurement settings are performed in random orientation. For each realization, $ 10^{12} $ measurement rounds are used to compute the finite key. }
		\label{Random measurement table}
	\end{table}
	 From the higher chance of Bell inequality violation, we obtain a higher chance of achieving a non-zero key. This result also reverberates the results of the nonlocal volume\footnote{The nonlocal volume is a statistical measure of nonlocality introduced in \cite{fonseca2015measure}. It is defined as the probability that the correlations, generated from randomly chosen projective measurements made on a given state $\arrowvert \psi \rangle$, violate any Bell inequality (a witness of nonlocality) by any extent. Generally, the nonlocal volume for a given state $\arrowvert \psi \rangle$ is obtained by $\int d\Omega f(\arrowvert \psi \rangle,\Omega)$, where one integrates over the measurement parameters $\Omega$ \cite{lipinska2018towards}. $f(\arrowvert \psi \rangle,\Omega)$ is an indicator function that takes the value 1 if the resultant correlations, generated from the state and measurements, are nonlocal. Otherwise, it will take the value 0.} in \cite{lipinska2018towards, de2017multipartite, de2020strength,fonseca2015measure,fonseca2018survey,barasinski2018volume}, which increases for the pure bipartite entangled state when more measurement settings for each party are used. We observe the same phenomenon in our case, regarding the secret key rate. As the nonlocal volume shrinks by adding noise, it also reduces the probability of producing a non-zero key rate.  \\
	
	\begin{table}[H]
		\centering
		\begin{tabular}{| c | c | c |}
			\hline
			& $ d=3 $ & $ d=4 $ \\
			\hline \hline
			p=0 \% & $\sim$ 6.4\% & $\sim$ 2.5\%  \\
			\hline
			p=1 \% & $\sim$ 2.2\% & $\sim$ 0\% \\
			\hline
			p=2 \% & $\sim$ 0.3\% & $\sim$ 0\%\\
			\hline
		\end{tabular}
		\caption{
		Approximate probability of achieving a non-zero secret key rate in the $[2,d]$ scenario, for  white noise of different probability $p$  added to the maximally entangled state 
		of two qudits (see Eq.~(\ref{maxentquditstate})). 
		All other details are as in Table~\ref{Random measurement table}.
		}
		\label{Random measurement table 22d}
	\end{table} 
	
	Let us now analyse the $[2,d]$ scenario (i.e. $d$ outcomes per measurement) with random measurement settings. The shared state is a noisy maximally entangled state of two qudits (see Eq.~(\ref{maxentquditstate})). We compute the approximate probability for achieving a non-zero secret key rate (see Table \ref{Random measurement table 22d}). The statistics are based on $10^5$ realizations. The measurements for key generation are in the computational basis. The remaining measurement settings are chosen randomly.
	
	We observe that for $ d \geq 3 $, the probability to extract a non-zero secret key is smaller compared to the case with only two outcomes. This follows from the fact that the non-local volume shrinks by increasing the dimension of the maximally entangled state. This results in a smaller probability of generating non-local correlations, and therefore a smaller chance of a Bell inequality violation \cite{fonseca2018survey} and smaller probability of a non-zero secret key.

	\section{Conclusions}
	Several protocols for device-independent quantum key distribution (DIQKD) have the common feature that they rely on the violation of a predetermined Bell inequality. We propose a robust DIQKD procedure where a suitable Bell inequality is instead constructed from the measurement data. This constructed Bell inequality leads to the maximum Bell violation for the particular set-up. Then we use the Bell inequality and its corresponding violation to bound the secret key rate via lower bounding the min-entropy. 
	\\
	 We provide a finite-size key analysis of our proposed procedure. We bound the statistical fluctuations of the Bell inequality violation by Hoeffding's inequality. However, we do not claim that our choice of concentration inequality \cite{massart2007concentration,boucheron2013concentration,chung2006concentration} is optimal for a finite number of measurement rounds. 
	- Note that our method could also be implemented for the estimation of global randomness in a device-independent randomness generation protocol.\\
	We have illustrated our method with several examples for different numbers of measurement settings and different numbers of outcomes, showing cases when our method yields a higher secret key rate than using the standard CHSH inequality. In comparison to related approaches (Ref.~\cite{nieto2014using,bancal2014more}), we provide examples where our approach needs  fewer number of measurement rounds to generate a non-zero secret key. We further showed the performance of our method in the case of random measurement settings. Finally, future work should address the use of more sophisticated methods of bounding the conditional von Neumann entropy \cite{schwonnek2020robust,brown2020computing}, which could increase the secret key rate, in comparison to the bounds based on the min-entropy.
		
	\section{Acknowledgements}
	The authors acknowledge support from the Federal Ministry of Education and Research (BMBF, Projects Q.Link.X and HQS). We also acknowledge support by the QuantERA project QuICHE, via the German Ministry for Education and Research (BMBF Grant No. 16KIS1119K). We thank Gláucia Murta, Federico Grasselli and Lucas Tendick for helpful discussions.
	
	\onecolumngrid
	\appendix
	\section{Definitions}
	We start with the definition of some quantities that will help us to derive the key rates for the DIQKD protocol.
	\begin{defn}[Min and max-entropy \cite{tomamichel2015quantum,renner2005simple}]\label{min-max entropy} 
		Let $\rho_{AB} \in \mathcal{P}(\mathcal{H}_A \otimes \mathcal{H}_B) $ and $ \sigma_{B} \in \mathcal{P}(\mathcal{H}_B) $. $  \mathcal{P}(\mathcal{H}_B) $ is the set of positive-semidefinite operators on the Hilbert space $\mathcal{H}_B$. 
		The min-entropy of $ \rho_{AB} $ conditioned on $ \sigma_{B} $ is 
		\begin{equation} \label{min-entropy Eq}
		H_{min}(\rho_{AB} \arrowvert \sigma_{B}) := -\log \lambda \, ,
		\end{equation}
		where $\lambda$ is the minimum real number such that $\lambda.(\mathbb{I}\otimes \sigma_{B})-\rho_{AB} \ge 0 $. The max-entropy of $ \rho_{AB} $ conditioned on $ \sigma_{B} $ is
		\begin{equation} \label{max-entropy Eq}
		H_{max}(\rho_{AB} \arrowvert \sigma_{B}) := \log \mathrm{Tr}((\mathbb{I} \otimes \sigma_{B})\rho_{AB}^0)\, ,
		\end{equation}
		where $\rho_{AB}^0$ denotes the projector onto the support of $\rho_{AB}$. 
	\end{defn}
	
	\begin{defn}[Smoothed min and max-entropy \cite{tomamichel2015quantum,vitanov2013chain}]\label{Smoothed entropy}
		For a quantum state $\rho_{AB}$ and $\epsilon \ge 0$, the smooth min-entropy of system $ A $ conditioned on $ B $ is defined as  
		\begin{equation} \label{smooth min-entropy Eq}
		H^{\epsilon}_{min}(A \arrowvert B) := \max_{\tilde{\rho}_{AB} \in \mathcal{B}^{\epsilon}(\rho_{AB})} H_{min}(A \arrowvert B)_{\tilde{\rho}_{AB}} \, ,
		\end{equation}
		and, the smooth max-entropy of system $ A $ conditioned on $ B $ is defined as   
		\begin{equation} \label{smooth max-entropy Eq}
		H^{\epsilon}_{max}(A \arrowvert B) := \min_{\tilde{\rho}_{AB} \in \mathcal{B}^{\epsilon}(\rho_{AB})} H_{max}(A \arrowvert B)_{\tilde{\rho}_{AB}} \, .
		\end{equation}
		$ \mathcal{B}^{\epsilon} $ is an $\epsilon$-ball of sub-normalized operators around the state $ \rho_{AB} $ defined in terms of the purified distance.
	\end{defn}
	

	Now we focus on the security parameters of quantum key distribution. The security of quantum key distribution can be split into two conditions. 
	\begin{defn}[Correctness \cite{renner2008security,arnon2019simple,murta2019towards}]\label{Correctness}
		A DIQKD protocol is $\epsilon_{corr}$-correct if the final key $\tilde{K}_{A}$ of Alice differs from the final key $\tilde{K}_{B}$ of Bob with probability less than $\epsilon_{corr}$, i.e.
		\begin{equation}
		\mathrm{Pr}(\Tilde{K}_{A} \neq \Tilde{K}_{B}) \leq \epsilon_{corr} \, .
		\end{equation}
	\end{defn}
	\begin{defn}[secrecy\cite{renner2008security,arnon2019simple,murta2019towards}]\label{Secrecy}
		For any $\epsilon_{sec} \geq 0$, a DIQKD protocol is $\epsilon_{sec}$ w.r.t the adversary E if the joint state satisfies 
		\begin{equation}
		p(\Omega) \frac{1}{2} \|\rho_{\Tilde{K}_{A}E|\Omega}-\tau_{\Tilde{K}_{A}} \otimes \rho_{E} \|_{1} \leq \epsilon_{sec} \, ,
		\end{equation}
		where $\tau_{\Tilde{K}_{A}}$ is the maximally mixed state on $\Tilde{K}_{A}$ of the protocol. Here $p(\Omega)$ is the probability of not aborting the protocol.
	\end{defn}
	If a protocol is $\epsilon_{corr}$-correct and $\epsilon_{sec}$-secret, then it is $\epsilon_{DIQKD}^s$-correct and secret for any $\epsilon_{DIQKD}^s \geq \epsilon_{corr}+\epsilon_{sec}$.\\
	The correctness (see Def. \ref{Correctness}) of the final key is ensured by the error correction step. During error correction, Alice sends a sufficient amount of information to Bob so that he can correct his raw key. If Alice and Bob do not abort in this step, then the probability that they end up with different raw keys is guaranteed to be very small (below $\epsilon_{EC}$). For the secrecy of the protocol (see Def. \ref{Secrecy}), one needs to estimate how far the final state describing Alice’s key and the eavesdropper's system is from the ideal one.

	\begin{defn}[Secret key rate\cite{arnon2019simple,murta2019towards}]
		If a protocol generates a correct and secret key of length l after n rounds, the the secret key rate is defined as 
		\begin{equation}
		r=\frac{l}{n} \, .
		\end{equation}
	\end{defn} 
	
	Any useful DIQKD protocol should not abort almost all the time. This is apprehended by the concept of completeness.
	\begin{defn}[security\cite{arnon2019simple,murta2019towards}]\label{security defn}
		A DIQKD protocol is $(\epsilon_{DIQKD}^s,\epsilon_{DIQKD}^c,l)$-secure if 
		\begin{enumerate}
			\item (soundness) For any implementation of the protocol, either it aborts with probability greater than $1-\epsilon_{DIQKD}^s$ or an $\epsilon_{DIQKD}^s$-correct and secret key of length l is obtained.
			\item (Completeness) There exists an honest implementation of the protocol such that the probability of not aborting, $p(\Omega)$, is greater than $1-\epsilon_{DIQKD}^c$.
		\end{enumerate}
	\end{defn}
	In the privacy amplification step, Alice and Bob want to turn their equal string of bits, which may be partially known to an eavesdropper, into a shorter completely secure string of bits. For this step, a 2-universal family of hash functions is needed. 
	\begin{defn}[2-universal hash function]\label{2-universal hash function}
	A family of hash functions $\mathcal{F}=\{f:$ $\left.\{0,1\}^{n} \rightarrow\{0,1\}^{\ell}\right\}$ is called 2-universal if for every two strings $x, x^{\prime} \in\{0,1\}^{n}$ with $x \neq x^{\prime}$ then
	\begin{equation}
	    \operatorname{Pr}_{f \in \mathcal{F}}\left(f(x)=f\left(x^{\prime}\right)\right)=\frac{1}{2^{\ell}} \, ,
	\end{equation}
	where $f$ is chosen uniformly at random in $\mathcal{F}$. The property of 2-universality ensures a good distribution of the outputs. For $\ell \leq n$ there always exist a 2-universal family of hash functions  \cite{carter1979universal}.
	\end{defn}
	Now we will state the quantum Leftover Hashing Lemma \cite{renner2005simple,tomamichel2011leftover}. It quantifies the secrecy of a protocol as a function of a conditional entropy of the state before privacy amplification and the length of the final key.
	\begin{theorem}[Leftover Hashing Lemma with smooth min-entropy\cite{arnon2019simple,murta2019towards,tomamichel2011leftover}]\label{leftover hashing}
		Let $\rho_{A^nE}$ be a classical quantum state. Let $\mathcal{F}$ be a 2-universal family of hash functions, from $\{0,1\}^n$ to $\{0,1\}^l$, that maps the classical n-bit string $A^{n}$ into $K_A$. Then
		\[ \| \rho_{K_AFE}-\tau_{K_A}\otimes \rho_{FE} \|  \le 2^{-\frac{1}{2}(H_{\text{min}}^{\epsilon}(A^n|E)_{\rho}-l)}+2\epsilon \, ,   \]
		where $F$ is a classical register that stores the hash function $f$.
	\end{theorem}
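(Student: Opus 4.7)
The plan is to establish the bound in two stages: first, reduce the smoothed statement to a non-smoothed analogue on a nearby state, and then prove the non-smoothed leftover hashing inequality by a direct 2-norm computation that leverages 2-universality together with the operator characterization of min-entropy.

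First, by Def.~\ref{Smoothed entropy} I would pick a subnormalized state $\tilde{\rho}_{A^nE}\in\mathcal{B}^{\epsilon}(\rho_{A^nE})$ achieving $H_{\min}^{\epsilon}(A^n|E)_{\rho}=H_{\min}(A^n|E)_{\tilde{\rho}}$. Let $\Phi$ denote the classical channel that appends a uniform register $F$ storing the hash $f\in\mathcal{F}$ and replaces $A^n$ by $K_A=f(A^n)$. Applying $\Phi$ to both $\rho_{A^nE}$ and $\tilde{\rho}_{A^nE}$, I would insert $\tilde{\rho}_{K_AFE}$ and $\tau_{K_A}\otimes\tilde{\rho}_{FE}$ as intermediate terms in $\|\rho_{K_AFE}-\tau_{K_A}\otimes\rho_{FE}\|$ and invoke the triangle inequality. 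The standard fact that channels are contractive in purified distance, combined with the comparison between purified and trace distance for subnormalized states, bounds each of the two outer differences by $\epsilon$ and thus contributes the additive $+2\epsilon$.

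Second, I would prove the non-smoothed bound $\|\tilde{\rho}_{K_AFE}-\tau_{K_A}\otimes\tilde{\rho}_{FE}\|\le 2^{-\frac{1}{2}(H_{\min}(A^n|E)_{\tilde{\rho}}-l)}$. By Def.~\ref{min-max entropy} there exists a density operator $\sigma_E$ such that $\tilde{\rho}_{A^nE}\le 2^{-H_{\min}(A^n|E)_{\tilde{\rho}}}\,\mathbb{1}_{A^n}\otimes\sigma_E$. Using the Cauchy--Schwarz bound $\|X\|_1\le\sqrt{\dim K_A}\cdot\|(\mathbb{1}\otimes\sigma_E^{-1/4})X(\mathbb{1}\otimes\sigma_E^{-1/4})\|_2$ with $\dim K_A=2^l$ reduces the problem to a weighted 2-norm estimate. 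Expanding the squared 2-norm produces a double sum over $x,x'\in\{0,1\}^n$ together with an average over $f\in\mathcal{F}$; the diagonal terms $x=x'$ cancel the product term $\tau_{K_A}\otimes\tilde{\rho}_{FE}$, while for $x\ne x'$ the 2-universality of $\mathcal{F}$ (Def.~\ref{2-universal hash function}) yields $\mathbb{E}_{f}[\chi(f(x)=f(x'))]=2^{-l}$. The remainder collapses to $2^{-l}\,\mathrm{Tr}\bigl[\sigma_E^{-1/2}\tilde{\rho}_{A^nE}\sigma_E^{-1/2}\tilde{\rho}_{A^nE}\bigr]$, which is at most $2^{-l}\cdot 2^{-H_{\min}(A^n|E)_{\tilde{\rho}}}\mathrm{Tr}\,\tilde{\rho}_{A^nE}\le 2^{-l}\cdot 2^{-H_{\min}(A^n|E)_{\tilde{\rho}}}$ by inserting the operator min-entropy bound once inside the trace. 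Taking the square root and recombining with the $\sqrt{2^l}$ dimension factor gives the desired exponent $-\tfrac{1}{2}(H_{\min}-l)$.

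The main obstacle will be step three: one has to handle the interplay between the classical hashing (acting only on $A^n$) and the quantum side information $E$ with care, so that the $\sigma_E$ weighting can be inserted without introducing extraneous dimension factors on $E$, and so that the operator inequality $\tilde{\rho}_{A^nE}\le 2^{-H_{\min}}\mathbb{1}\otimes\sigma_E$ can be substituted inside $\mathrm{Tr}[C^{\dagger}\,\cdot\,C]$ via operator monotonicity of the appropriate conjugation. A secondary subtlety is that $\tilde{\rho}$ is generally subnormalized, so the smoothing step must use the purified-distance machinery from Def.~\ref{Smoothed entropy} and only then be converted to trace distance, which is precisely what produces the factor of $2$ in front of $\epsilon$.
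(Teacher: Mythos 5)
The paper does not prove Theorem~\ref{leftover hashing} at all: it is imported verbatim from the literature (Renner's thesis and Tomamichel \emph{et al.}), so there is no in-paper argument to compare against. Your reconstruction follows the standard proof route --- smoothing via the triangle inequality to pick up the additive $2\epsilon$, then a non-smoothed bound obtained from a $\sigma_E$-weighted Cauchy--Schwarz step, an expansion of the weighted $2$-norm, and the $2$-universality average --- and that architecture is sound and is exactly how the cited sources proceed.

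There is, however, one concrete error in your accounting of the $2$-norm expansion, and as written it is internally inconsistent with the bound you claim to obtain. You state that the diagonal terms $x=x'$ cancel the product term $\tau_{K_A}\otimes\tilde{\rho}_{FE}$ and that the off-diagonal terms leave a remainder $2^{-\ell}\,\mathrm{Tr}\bigl[\sigma_E^{-1/2}\tilde{\rho}_{A^nE}\sigma_E^{-1/2}\tilde{\rho}_{A^nE}\bigr]$. The roles are reversed: it is the off-diagonal terms $x\neq x'$ that, via $\mathbb{E}_f[\chi(f(x)=f(x'))]=2^{-\ell}$ and positivity of the cross traces, are absorbed by (and cancel against) the subtracted $\tau_{K_A}\otimes\tilde{\rho}_{FE}$ contribution, while the surviving quantity is the diagonal sum $\sum_x \mathrm{Tr}\bigl[\sigma_E^{-1/2}\tilde{\rho}_E^x\sigma_E^{-1/2}\tilde{\rho}_E^x\bigr]=\mathrm{Tr}\bigl[\sigma_E^{-1/2}\tilde{\rho}_{A^nE}\sigma_E^{-1/2}\tilde{\rho}_{A^nE}\bigr]\le 2^{-H_{\min}(A^n|E)_{\tilde{\rho}}}$ \emph{without} the factor $2^{-\ell}$. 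Combined with the $\sqrt{2^{\ell}}$ from Cauchy--Schwarz this yields the correct exponent $-\tfrac{1}{2}(H_{\min}-\ell)$; your version, taken literally, would instead give $2^{-H_{\min}/2}$, which is a strictly stronger bound that is false in general (take $\ell$ much larger than $H_{\min}$: the hashed output cannot then be close to uniform, yet $2^{-H_{\min}/2}$ can be small). Once this bookkeeping is corrected, and provided the $\sigma_E$ used in the weighting is the (near-)optimizer from the definition of $H_{\min}(A^n|E)_{\tilde{\rho}}$ with the usual care about supports and pseudo-inverses, the argument goes through.
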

	With the Leftover hash lemma and the definition of secrecy (see Def. \ref{Secrecy}), we express the length of a secure key as a function of the entropy of Alice's raw key conditioned on Eve's information before privacy amplification.
	\begin{theorem}[Key length\cite{arnon2019simple,murta2019towards}]\label{key length}
		Let $P(\Omega)$ be the probability that the DIQKD protocol does not
		abort for a particular implementation. If the length of the key generated after privacy amplification is given by
		\[ l  \leqslant H_{\text{min}}^{\epsilon_{s}/P(\Omega)}(A^n|E)_{\rho_{|\Omega}}- 2\log\frac{1}{2\epsilon_{PA}} \, , \]
		then the DIQKD protocol is $\epsilon_{PA}+\epsilon_{s}$-secret.
	\end{theorem}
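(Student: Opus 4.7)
The plan is to derive the secrecy bound by a direct application of the quantum Leftover Hashing Lemma (Theorem \ref{leftover hashing}) to the post-protocol classical-quantum state $\rho_{A^nE|\Omega}$ conditioned on the non-abort event $\Omega$, and then rearrange the resulting trace-distance estimate into the form demanded by Def. \ref{Secrecy}.

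First, I would restrict to the state $\rho_{A^nE|\Omega}$ that is actually fed into privacy amplification, and apply Theorem \ref{leftover hashing} with the smoothing parameter chosen as $\epsilon := \epsilon_s/P(\Omega)$. The lemma then bounds $\|\rho_{K_AFE|\Omega} - \tau_{K_A}\otimes \rho_{FE|\Omega}\|_1$ by $2^{-\tfrac{1}{2}(H^{\epsilon_s/P(\Omega)}_{\text{min}}(A^n|E)_{\rho_{|\Omega}} - l)} + 2\epsilon_s/P(\Omega)$.

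Next, I would invoke the hypothesis on $l$: the assumption $l \leq H^{\epsilon_s/P(\Omega)}_{\text{min}}(A^n|E)_{\rho_{|\Omega}} - 2\log\tfrac{1}{2\epsilon_{PA}}$ rearranges to make the exponent at least $2\log\tfrac{1}{2\epsilon_{PA}}$, collapsing the first term on the right-hand side to $2\epsilon_{PA}$. Multiplying through by $P(\Omega)/2$ to match Def. \ref{Secrecy} then gives
\[ P(\Omega)\cdot \tfrac{1}{2}\bigl\|\rho_{K_AFE|\Omega} - \tau_{K_A}\otimes \rho_{FE|\Omega}\bigr\|_1 \;\leq\; P(\Omega)\epsilon_{PA} + \epsilon_s \;\leq\; \epsilon_{PA} + \epsilon_s, \]
using $P(\Omega) \leq 1$ in the last step. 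This is exactly the statement that the protocol is $(\epsilon_{PA}+\epsilon_s)$-secret.

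There is essentially no obstacle; the whole argument is a few lines of algebra once Theorem \ref{leftover hashing} is available. The only design choice worth highlighting is the non-obvious rescaling $\epsilon \mapsto \epsilon_s/P(\Omega)$ of the smoothing parameter: this is precisely what cancels the weighting by $P(\Omega)$ inside the secrecy definition so that the residual smoothing term becomes an unweighted $\epsilon_s$, producing a clean bound that does not depend on how often the protocol aborts.
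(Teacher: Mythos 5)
Your proposal is correct and is essentially the standard argument from the cited references \cite{arnon2019simple,murta2019towards}; the paper itself does not reprove Theorem~\ref{key length} but simply imports it. Applying Theorem~\ref{leftover hashing} to $\rho_{|\Omega}$ with smoothing parameter $\epsilon_s/P(\Omega)$, using the hypothesis on $l$ to collapse the exponential term to $2\epsilon_{PA}$, and multiplying by $P(\Omega)/2$ to match Def.~\ref{Secrecy} is exactly the intended derivation, including the key observation that the rescaled smoothing parameter cancels the $P(\Omega)$ weighting.
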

	In this paper, we have considered the IID scenario (collective attacks). In the assumption of collective attacks, the distributed state and the behavior of Alice's and Bob's devices are the same in every round of the protocol. Eve can carry out arbitrary operations in her quantum side information. This assumption implies that after $n$ rounds of the protocol, the state shared by Alice, Bob and Eve is $\rho_{A^n B^n E}=\rho_{ABE}^{\otimes n}$. The quantum asymptotic equipartition property \cite{tomamichel2009fully,tomamichel2015quantum} allows to bound the conditional smooth min-entropy of state $\rho_{AE}^{\otimes}$ by the conditional von Neumann entropy of the state $\rho_{AE}$.
	\begin{theorem}[Asymptotic equipartition property \cite{tomamichel2009fully}]\label{Asymptotic equipartition property}
		Let $\rho=\rho_{AE}^{\otimes n}$ be an IID state. Then for $ n \ge \frac{8}{5}\log\frac{2}{\epsilon^2} $
		\[ H_{\text{min}}^{\epsilon} (A^n \arrowvert E^n)_{\rho_{AE}^{\otimes n}} > nH(A \arrowvert E)_{\rho_{AE}} - \sqrt{n}\delta(\epsilon,\chi) \, , \]
		and similarly
		\[ H_{\text{max}}^{\epsilon} (A^n \arrowvert E^n)_{\rho_{AE}^{\otimes n}} < nH(A \arrowvert E)_{\rho_{AE}} + \sqrt{n}\delta(\epsilon,\chi) \, , \]
		where $ \delta(\epsilon,\chi)=4\log (\chi) \sqrt{\log \frac{2}{\epsilon^2}} $ and $ \chi = \sqrt{2^{-H_{min}(A\arrowvert E)_{\rho_{AE}}}}+\sqrt{2^{H_{max}(A\arrowvert E)_{\rho_{AE}}}}+1 $.
	\end{theorem}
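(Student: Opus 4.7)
The plan is to establish both bounds via the standard R\'enyi-entropy route: relate the smooth min- and max-entropies to conditional R\'enyi entropies of order $\alpha \neq 1$, exploit the exact additivity of the latter on i.i.d.\ states, and then Taylor-expand at $\alpha = 1$ to recover $n H(A|E)$ as the dominant term with a $\sqrt{n}$-sized correction. I describe the min-entropy inequality explicitly; the max-entropy inequality is handled by a dual argument.

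First I would introduce the sandwiched conditional R\'enyi entropy $H_\alpha(A|E)_\rho$ and derive the smoothing inequality
\begin{equation*}
H_{\min}^{\epsilon}(A|E)_{\rho} \;\ge\; H_{\alpha}(A|E)_{\rho} \;-\; \frac{1}{\alpha - 1}\log\frac{2}{\epsilon^{2}} \qquad (\alpha > 1),
\end{equation*}
starting from the operational characterization of $H_{\min}^{\epsilon}$ on the purified-distance $\epsilon$-ball and using monotonicity of the sandwiched divergence. Because $H_\alpha$ is additive on product states, applying this inequality to $\rho_{AE}^{\otimes n}$ gives
\begin{equation*}
H_{\min}^{\epsilon}(A^{n}|E^{n})_{\rho^{\otimes n}} \;\ge\; n\,H_{\alpha}(A|E)_{\rho} \;-\; \frac{1}{\alpha-1}\log\frac{2}{\epsilon^{2}}.
\end{equation*}

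Next, writing $\alpha = 1 + \eta$ with small $\eta > 0$, a mean-value estimate on $[1, 1+\eta]$ yields a first-order bound $H_\alpha(A|E)_\rho \ge H(A|E)_\rho - \eta\, c(\chi)$, where $c(\chi)$ depends only on the single-copy state through $\chi = \sqrt{2^{-H_{\min}(A|E)_{\rho}}} + \sqrt{2^{H_{\max}(A|E)_{\rho}}} + 1$. Substituting into the previous display leaves two competing error terms, $-n\eta\, c(\chi)$ and $-(1/\eta)\log(2/\epsilon^{2})$, which are balanced by choosing $\eta \propto \sqrt{\log(2/\epsilon^{2})/(n\,c(\chi)^{2})}$. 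This optimization collapses the total correction into the claimed $\sqrt{n}\,\delta(\epsilon,\chi)$ with $\delta(\epsilon,\chi) = 4\log(\chi)\sqrt{\log(2/\epsilon^{2})}$; the hypothesis $n \ge (8/5)\log(2/\epsilon^{2})$ is exactly what is needed to keep the optimizing $\eta$ inside the interval on which the smoothing bound and the Taylor estimate are simultaneously valid.

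The main obstacle is the sharp second-order control of $\alpha \mapsto H_\alpha(A|E)_\rho$ needed to produce the exact prefactor $4$ and the specific form of $\chi$. Concretely, one has to bound operator norms of the form $\|\rho_{AE}^{1/2}\sigma_E^{-1/2}\|$ appearing in the sandwiched R\'enyi divergence by the single-copy conditional min- and max-entropies, which is the technically heaviest step in the Tomamichel--Colbeck--Renner argument \cite{tomamichel2009fully}, and which I would follow closely. The max-entropy inequality is proved by the analogous argument using $\alpha \in (0,1)$ and the upper-bound smoothing estimate $H_{\max}^{\epsilon}(A|E)_{\rho} \le H_{\alpha}(A|E)_{\rho} + \frac{1}{1-\alpha}\log(2/\epsilon^{2})$, with the same balancing in $\eta = 1-\alpha$ producing the matching $\sqrt{n}\,\delta(\epsilon,\chi)$ correction on the upper side.
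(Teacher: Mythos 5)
The paper does not prove this theorem at all: it is imported verbatim from Ref.~\cite{tomamichel2009fully} (Tomamichel--Colbeck--Renner) and used as a black box in Appendix~B, so there is no in-paper proof to compare against. Your sketch is an accurate outline of the architecture of the proof in that reference: pass from the smooth min-entropy to a conditional R\'enyi entropy of order $\alpha>1$ at a cost of $\frac{1}{\alpha-1}\log\frac{2}{\epsilon^2}$, use additivity on $\rho_{AE}^{\otimes n}$, expand around $\alpha=1$ with a remainder controlled by the single-copy min- and max-entropies (which is where $\chi$ and the prefactor $4$ come from), and optimize $\alpha-1\propto n^{-1/2}$ to get the $\sqrt{n}\,\delta(\epsilon,\chi)$ correction, with $n\ge\frac{8}{5}\log\frac{2}{\epsilon^2}$ keeping the optimizer in range; the max-entropy bound follows dually with $\alpha\in(0,1)$. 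Two caveats: first, the 2009 original does not use the sandwiched R\'enyi divergence (which was only introduced in 2013) but a Petz-type/collision-entropy quantity, so if you insist on the sandwiched family you are really reconstructing the later textbook proof rather than the cited one -- the constants still work out, but you should be consistent about which family you differentiate. Second, the step you flag as "technically heaviest" -- the uniform second-order bound on $\alpha\mapsto H_\alpha(A|E)$ in terms of $\log\chi$ -- is precisely the content of the theorem; deferring it to "following TCR closely" means your proposal is a correct roadmap but not a self-contained proof. For the purposes of this paper, which itself only cites the result, that level of detail is adequate.
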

	
	\begin{lemma}\label{QBER_lemma} \cite{grasselli2019conference,yin2019finite}
		Let $\mathcal{X}_{n+k}$ be a random binary string of $ n + k $ bits, $\mathcal{X}_{k}$ be a random sample (without replacement) of m entries from the string $\mathcal{X}_{n+k}$ and $\mathcal{X}_{n}$ be the remaining bit string. $ \Lambda_{k} $ and $ \Lambda_{n} $ are the frequencies of bit value 1 in string $\mathcal{X}_{k}$ and $\mathcal{X}_{n}$, respectively. For any $\varepsilon_1 > 0 $, it holds the upper tail inequality:
		\begin{equation}
		\mathrm{Pr}[\Lambda_{n} \ge \Lambda_{k}+\gamma_1(n,k,\Lambda_{k},\varepsilon_1)] > \varepsilon_1 \, ,
		\end{equation} 
		where $ \gamma_1(a,b,c,d) $ is the positive root of
		\begin{equation*}
		\ln{\binom{bc}{b}}+\ln{ \binom{ac+a\gamma_1(a,b,c,d)}{a} } = \ln{\binom{(a+b)c+a\gamma_1(a,b,c,d)}{a+b}} + \ln{d} \, .
		\end{equation*}
		For $ \varepsilon_2 > 0 $, we have the lower tail inequality:
		\begin{equation}
		\mathrm{Pr}[\Lambda_{n} \le \Lambda_{k}-\gamma_2(n,k,\Lambda_{k},\varepsilon_2)] > \varepsilon_2 \, ,
		\end{equation} 
		where $ \gamma_2(a,b,c,d) $ is the positive root of
		\begin{equation*}
		\ln{\binom{bc}{b}}+\ln{ \binom{ac-a\gamma_2(a,b,c,d)}{a} } = \ln{\binom{(a+b)c-a\gamma_2(a,b,c,d)}{a+b}} + \ln{d} \, .
		\end{equation*}
	\end{lemma}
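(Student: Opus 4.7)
The plan is to establish this hypergeometric-style tail inequality for sampling without replacement by a direct combinatorial analysis of the induced distribution, in the spirit of \cite{grasselli2019conference,yin2019finite}.

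I would first condition on the total number of ones $S = n\Lambda_n + k\Lambda_k$ in the full bit string $\mathcal{X}_{n+k}$. Given $S$, the count $k\Lambda_k$ of ones in the uniformly random subsample $\mathcal{X}_k$ is hypergeometric with parameters $(n+k, S, k)$, and the frequency $\Lambda_n$ in the remainder is a deterministic affine function of $\Lambda_k$. The target event rearranges as
\begin{equation*}
\Lambda_n \geq \Lambda_k + \gamma_1 \quad \Longleftrightarrow \quad k\Lambda_k \leq \frac{kS}{n+k} - \frac{kn\gamma_1}{n+k},
\end{equation*}
which is a one-sided lower tail for the hypergeometric count. Expressing this tail as
\begin{equation*}
\sum_{j} \frac{\binom{S}{j}\binom{n+k-S}{k-j}}{\binom{n+k}{k}}
\end{equation*}
over all $j$ compatible with the rearranged inequality exposes the combinatorial structure that must be matched to the implicit equation defining $\gamma_1$.

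Next, I would dominate this sum by its boundary term. The hypergeometric mass function is log-concave in $j$, so the consecutive-ratio test converts the tail sum into a geometric series controlled by the probability at the boundary configuration $S = (n+k)\Lambda_k + n\gamma_1$. Substituting this configuration and cancelling the common denominator $\binom{n+k}{k}$ yields an expression whose logarithm reproduces, term by term, the transcendental equation in the statement of the lemma once one sets the bound equal to $\varepsilon_1$ and takes natural logarithms. Monotonicity of the right-hand side in $\gamma_1$ then guarantees existence and uniqueness of the positive root. The lower-tail inequality is obtained by applying the identical argument to the complementary bit string in which ones and zeros are interchanged.

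The main obstacle I anticipate lies in the tightness of the single-term bound. To recover the clean closed form stated, one must show that the sub-leading contributions of the hypergeometric tail are absorbed without any additional logarithmic slack; otherwise an extra factor would surface in the defining equation of $\gamma_1$. This is typically handled by combining log-concavity of the hypergeometric mass function with a geometric-series estimate that is sharp in the DIQKD-relevant regime where $\gamma_1$ is small compared with $\Lambda_k(1-\Lambda_k)$. Care is required near the mode of the distribution so that one does not forfeit a factor of order $\sqrt{k}$ that would spoil the finite-size key bound used later in the protocol.
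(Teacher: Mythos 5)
First, a point of reference: the paper does not prove this lemma at all --- it is imported verbatim from \cite{grasselli2019conference,yin2019finite} and used as a black box in Eq.~(\ref{QBER deviation ineq}), so there is no internal proof to compare your argument against. Judged on its own, your proposal gets the setup right but leaves the decisive step open. Conditioning on the number of ones $S$ in the full string, identifying $k\Lambda_k$ as hypergeometric, rewriting the event as the lower tail $k\Lambda_k \le \frac{kS}{n+k}-\frac{kn\gamma_1}{n+k}$, and recognizing that the transcendental equation for $\gamma_1$ is precisely the condition that the point mass of the boundary configuration $S=(n+k)\Lambda_k+n\gamma_1$ equals $\varepsilon_1$ --- all of this is correct. (You silently repair two typographical defects of the statement, namely the direction of the probability inequality and the exchanged upper and lower entries of the binomial coefficients; that is the right reading given how the lemma is used.)

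The genuine gap is the step you flag and then defer: bounding the entire hypergeometric lower tail by its single boundary term. Log-concavity does give a geometric series, but its ratio at the boundary is $1-\Theta(z/\sigma)$, where $\sigma=\Theta(\sqrt{k})$ is the standard deviation of the count and $z$ is the number of standard deviations separating the boundary from the mode; summing the series therefore costs a factor $\Theta(\sigma/z)$. Because the defining equation places the boundary only $z=\Theta\bigl(\sqrt{\ln(1/\varepsilon_1)}\bigr)$ deviations out, this factor is $\Theta\bigl(\sqrt{k/\ln(1/\varepsilon_1)}\bigr)$ --- polynomially large in exactly the regime where the lemma is invoked ($k\sim N\xi/3$ with $\varepsilon_1\sim 10^{-10}$), not a harmless constant. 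So the tools you name provably cannot absorb the sub-leading terms ``without additional logarithmic slack''; a direct tail-versus-endpoint comparison of this kind fails by a factor of order $\sigma/z$. A complete proof has to follow the specific argument of the cited references, which works with the joint event $\{\Lambda_k=c\}\cap\{\Lambda_n\ge c+\gamma_1(c)\}$ rather than the marginal tail at a fixed threshold. That reformulation also addresses a second point your sketch elides: $\gamma_1$ is a function of the \emph{random} observed value $\Lambda_k$, so the event in the lemma is a union over observed values, each with its own threshold, and not a single fixed-threshold hypergeometric tail as your conditioning-on-$S$ picture implicitly assumes.
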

	
	\begin{lemma}\label{Hoeffding lemma} \cite{hoeffding1963large,hoeffding1994probability}
		Let $X_1, X_2, \cdots, X_n$ be independent random variables strictly bounded by the intervals $[a_i,b_i]$, i.e. $a_i \le X_i \le b_i$. We define 
		\begin{equation*}
		\Bar{X}=\frac{1}{n}(X_1+X_2+\cdots+X_n) \, .
		\end{equation*} 
		Then, Hoeffding's inequality reads
		\begin{equation*}
		\mathrm{Pr}\Big(\Bar{X}-E[\Bar{X}] \ge t \Big) \le \exp{\Bigg(-\frac{2n^2t^2}{\sum_{i=1}^n(b_i-a_i)^2}}\Bigg) \, .
		\end{equation*}
		Let $c_i:=b_i-a_i$ and $c_i \le C$ $\forall$ $i$. Then, Hoeffding's inequality reads 
		\begin{equation*}
		\mathrm{Pr}\Big(\Bar{X}-E[\Bar{X}] \ge t \Big) \le \exp{\Bigg(-\frac{2n^2t^2}{nC^2}}\Bigg) =  \exp{\Bigg(-\frac{2nt^2}{C^2}}\Bigg)\, .
		\end{equation*}
	\end{lemma}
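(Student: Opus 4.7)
The plan is to follow the classical Chernoff--Hoeffding route: exponentiate, apply Markov, factorize using independence, and bound each single-variable moment generating function (MGF). Concretely, let $S_n = \sum_{i=1}^n (X_i - \mathbb{E}[X_i])$ so that $\bar X - \mathbb{E}[\bar X] = S_n/n$, and fix an arbitrary $s>0$. By monotonicity of $x \mapsto e^{sx}$ and Markov's inequality,
\begin{equation*}
\mathrm{Pr}(\bar X - \mathbb{E}[\bar X] \ge t) = \mathrm{Pr}(e^{sS_n} \ge e^{snt}) \le e^{-snt}\,\mathbb{E}[e^{sS_n}].
\end{equation*}
Independence of the $X_i$ then factorizes the MGF as $\mathbb{E}[e^{sS_n}] = \prod_{i=1}^{n} \mathbb{E}[e^{s(X_i - \mathbb{E}[X_i])}]$, reducing the problem to a uniform bound on each single-variable centered MGF.

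Next I would establish Hoeffding's lemma: if $Y$ is a random variable with $\mathbb{E}[Y]=0$ and $a \le Y \le b$, then $\mathbb{E}[e^{sY}] \le \exp\!\bigl(s^2(b-a)^2/8\bigr)$. The key step is convexity: for $Y \in [a,b]$, $e^{sY} \le \tfrac{b-Y}{b-a} e^{sa} + \tfrac{Y-a}{b-a}e^{sb}$, so taking expectations and using $\mathbb{E}[Y]=0$ gives a function of $s$ only; a Taylor expansion of its logarithm (or equivalently a second-derivative bound of the associated log-MGF by $(b-a)^2/4$) yields the stated Gaussian-type bound. Applying this with $Y_i := X_i - \mathbb{E}[X_i]$ on the shifted interval of length $c_i = b_i - a_i$ gives $\mathbb{E}[e^{sY_i}] \le \exp(s^2 c_i^2/8)$.

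Combining these ingredients,
\begin{equation*}
\mathrm{Pr}(\bar X - \mathbb{E}[\bar X] \ge t) \le \exp\!\Bigl(-snt + \tfrac{s^2}{8}\sum_{i=1}^n c_i^2\Bigr).
\end{equation*}
Minimizing the right-hand side over $s>0$ by setting $s = 4nt/\sum_i c_i^2$ yields the first displayed inequality of the lemma. The second statement follows immediately by substituting $c_i \le C$ into $\sum_i c_i^2 \le nC^2$. I expect the only non-routine ingredient to be Hoeffding's lemma itself (the convexity-plus-Taylor argument giving the $(b-a)^2/8$ constant); everything else is bookkeeping around Markov's inequality and independence. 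Since this lemma appears in the appendix purely as a cited tool (with references \cite{hoeffding1963large,hoeffding1994probability}), a brief sketch along these lines, or a direct citation, suffices.
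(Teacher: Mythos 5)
Your proof is correct: the Markov/Chernoff exponentiation, factorization by independence, Hoeffding's single-variable MGF lemma with the $(b_i-a_i)^2/8$ constant, and the optimization $s=4nt/\sum_i c_i^2$ together yield exactly the stated bound, and the second display follows trivially from $\sum_i c_i^2 \le nC^2$. The paper itself offers no proof of this lemma --- it is imported verbatim from the cited references \cite{hoeffding1963large,hoeffding1994probability} --- so your sketch is, as you anticipated, the standard argument and more than suffices.
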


	\section{Secret key analysis} \label{Security key analysis}
	
	\begin{theorem}[Completeness]
		The DIQKD protocol stated in Sec. \ref{protocol} is $\epsilon_{est}+\epsilon^{\gamma}_{est}$ complete.
	\end{theorem}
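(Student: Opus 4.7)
The plan is to identify every place in the protocol of Sec.~\ref{protocol} where an \emph{honest} execution can abort, bound each abort probability separately by the concentration inequalities already recorded in the paper, and then combine them by a union bound. By Def.~\ref{security defn}, completeness means that for some honest implementation the non-abort probability $p(\Omega)$ satisfies $p(\Omega) \ge 1 - \epsilon_{DIQKD}^c$; hence it is enough to show that the total abort probability is at most $\epsilon_{est} + \epsilon_{est}^{\gamma}$.

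First I would isolate the three abort events: (i) in Step~2, the estimated distribution $\hat{\mathbf{P}}_1$ on the first block happens to lie inside the classical polytope $\mathcal{P}$; (ii) in Step~2, the hypothesis test fails, i.e.\ $B[\hat{\mathbf{P}}_3] < B[\hat{\mathbf{P}}_2] - \delta_{est}$; and (iii) in Step~4, the error-correction protocol aborts. For an honest i.i.d.\ implementation whose true distribution $\mathbf{P}$ lies strictly outside $\mathcal{P}$ with Bell value $B[\mathbf{P}]$ exceeding the classical bound $c$, event (i) has probability exponentially small in $N$ and can be absorbed into the other terms (in particular it is dominated by the Hoeffding tail governing event (ii)). The substantive work is in bounding (ii) and (iii).

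For event (ii), I would apply Hoeffding's inequality (Lemma~\ref{Hoeffding lemma}) to the random variables $\hat{B}_i$ defined in Sec.~\ref{statfluc}, noting that in the i.i.d.\ honest scenario both $B[\hat{\mathbf{P}}_2]$ and $B[\hat{\mathbf{P}}_3]$ are unbiased estimators of the true Bell value $B[\mathbf{P}]$ computed from disjoint samples of size $N\xi/3$. Using the same argument that yields Eq.~(\ref{BellVal deviation ineq}), the probability that $B[\hat{\mathbf{P}}_3]$ underestimates $B[\mathbf{P}]$ by more than $\delta_{est}$ is at most $\epsilon_{est} = \exp(-2N\xi \delta_{est}^2/(3\gamma^2))$, and combined with the honest-case relation $B[\hat{\mathbf{P}}_2] \le B[\mathbf{P}]$ (up to a symmetric fluctuation that can be absorbed into $\delta_{est}$), this forces $B[\hat{\mathbf{P}}_3] \ge B[\hat{\mathbf{P}}_2] - \delta_{est}$ except with probability $\epsilon_{est}$. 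For event (iii), the protocol description already provides the bound $\epsilon_{EC}^c \le \epsilon_{est}^{\gamma}$, which was itself derived via the tail inequality of Lemma~\ref{QBER_lemma} applied to the QBER estimate.

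Finally, a union bound over these independent (or at worst jointly analysable) failure events yields
\[
\Pr(\text{protocol aborts in the honest run}) \;\le\; \epsilon_{est} + \epsilon_{est}^{\gamma},
\]
which is precisely the claimed completeness parameter. The main technical subtlety I anticipate is correctly casting event (ii) as a one-sided Hoeffding statement that already accounts for the fluctuations of both $\hat{\mathbf{P}}_2$ and $\hat{\mathbf{P}}_3$ inside a single $\delta_{est}$ window; once one recognises that the hypothesis threshold $B[\hat{\mathbf{P}}_2]-\delta_{est}$ is designed so that the honest-case deviation of $\hat{\mathbf{P}}_3$ from $\mathbf{P}$ governs the failure, the remaining steps are just bookkeeping and the union bound.
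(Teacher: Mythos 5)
Your proposal follows essentially the same route as the paper's proof: decompose the honest-run abort probability into the Bell-test failure (bounded by $\epsilon_{est}$ via the Hoeffding argument behind Eq.~(\ref{BellVal deviation ineq})) and the error-correction abort (bounded by $\epsilon_{est}^{\gamma}$ via the QBER tail bound), then apply a union bound. You additionally flag the $\hat{\mathbf{P}}_1\in\mathcal{P}$ abort event and the two-sided-fluctuation subtlety in the hypothesis test, both of which the paper's own proof passes over silently, so your version is if anything slightly more careful.
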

	\begin{proof}
		The protocol can abort in two instances. Either it will abort if the error correction failed or if the estimated Bell violation $B[\hat{\textbf{P}}_3]$ is not high enough. The probability that the error correction fails can only happen if the real QBER $ Q $ is larger than $ \hat{Q}+\gamma_{est} $, which happens with probability $\epsilon_{est}^{\gamma}$, see Sec. \ref{protocol} for details. The protocol also aborts if the estimated Bell violation $B[\hat{\textbf{P}}_3]$ is smaller $B[\hat{\textbf{P}}_2]-\delta_{est})$, see Sec. \ref{protocol} for details. Thus, considering an honest implementation consisting of IID rounds, we can bound the probability of abortion of the protocol: 
		\begin{align}\label{Appendix equation 1}
		\text{p(abort)} &= \text{p((EC aborts) or (Bell test fails))} \\ \nonumber
		& \le \text{p(EC aborts)}+\text{p(Bell test fails)} \\ \nonumber
		& \le \text{p(QBER test fails)}+\text{p(Bell test fails)} \\ \nonumber
		& = p(Q > \hat{Q} + \gamma_{est}) + p(B[\hat{\textbf{P}}_3] < B[\hat{\textbf{P}}_2]-\delta_{est}) \\ \nonumber
		& = \epsilon^{\gamma}_{est} + \epsilon_{est} \, ,
		\end{align}	
		where $ \epsilon_{est} $ is defined in Eq.~(\ref{BellVal deviation ineq}), and $ \epsilon^{\gamma}_{est} $ is defined in Eq.~(\ref{QBER deviation ineq}). Thus, we get $\epsilon_{DIQKD}^c \leq \epsilon_{est}+\epsilon^{\gamma}_{est}$.
	\end{proof}
	
	For the \textbf{soundness}, we have to evaluate the correctness and secrecy, defined in Def. \ref{Correctness} and Def. \ref{Secrecy}, respectively. In case of correctness, if we have an error correction protocol that does not abort, then Alice (Bob) will have the raw key $K_A$ ($K_B$) after the protocol. The string $K_B$ differs from $K_A$ with probability less than $\epsilon_{EC}$ and as the final keys $\Tilde{K}_A$ and $\Tilde{K}_B$ are equal if the raw keys are equal, it follows:
	\begin{equation*}
	P(\Tilde{K}_A \neq \Tilde{K}_B) \le P(K_A \neq K_B) \le \epsilon_{EC} \, .
	\end{equation*}
	For secrecy, let us recall that $\Omega$ is defined as the event when the protocol does not abort. This happens when the error correction protocol does not abort and achieved the required Bell violation according to Alice's and Bob's outputs (and inputs). Now define the event $\hat{\Omega}$ as the event $\Omega$ (protocol not aborting) and the error correction being successful i.e. $K_A=K_B$. Thus, 
	\begin{align}\label{state discrimination of EC}
	\|\rho_{\Tilde{K}_{A}E_{|\Omega}}-\tau_{\Tilde{K}_{A}}\otimes\rho_{E}\|_{1} & \le \| \rho_{\Tilde{K}_{A}E_{|\Omega}}-\rho_{\Tilde{K}_{A}E_{|\hat{\Omega}}} \|_{1} + \| \rho_{\Tilde{K}_{A}E_{|\hat{\Omega}}} - \tau_{\Tilde{K}_{A}}\otimes\rho_{E}\|_{1} \\ \nonumber
	& \le \epsilon_{EC}+ \| \rho_{\Tilde{K}_{A}E_{|\hat{\Omega}}} - \tau_{\Tilde{K}_{A}}\otimes\rho_{E}\|_{1} \, .
	\end{align}
	The first inequality follows from the triangular inequality of the trace distance \cite{nielsen2002quantum}. $\rho_{\Tilde{K}_{A}E_{|\Omega}}$ is the joint classical quantum state of Alice and Eve if the protocol does not abort. $\rho_{\Tilde{K}_{A}E_{|\hat{\Omega}}}$ is the joint classical quantum state of Alice and Eve if the protocol does not abort and the error correction is successful. When error correction succeeds, the probability of $ K_A=K_B $ is higher than $ (1-\epsilon_{EC}) $. Conversely, the probability $K_A \neq K_B$ is less than $\epsilon_{EC}$. Thus the second inequality of the Eq.~(\ref{state discrimination of EC}) comes from 
	\begin{align}
	    \| \rho_{\Tilde{K}_{A}E_{|\Omega}}-\rho_{\Tilde{K}_{A}E_{|\hat{\Omega}}} \|_{1} \le  (1-\epsilon_{EC}) \| \rho_{\Tilde{K}_{A}E_{|\hat{\Omega}}} - \rho_{\Tilde{K}_{A}E_{|\hat{\Omega}}} \|_{1} + \epsilon_{EC} \| \rho_{\Tilde{K}_{A}E_{|\hat{\Omega}}} - \rho_{\Tilde{K}_{A}E_{|\hat{\Omega}^c}} \|_{1} \le \epsilon_{EC} 
	\end{align}
	$\hat{\Omega}^{c}$ is defined as the event when the protocol does not abort but error correction is not successful , i.e $K_A \neq K_B$.\\
	
	Now we proceed to evaluate the term $\| \rho_{\Tilde{K}_{A}E_{|\hat{\Omega}}} - \tau_{\Tilde{K}_{A}}\otimes\rho_{E}\|_{1}$ of Eq.~(\ref{state discrimination of EC}). We will follow the path showed in \cite{murta2019towards,arnon2019simple}. Given that the protocol did not abort, the maximal length of a secure key is determined by the smooth min-entropy of Alice's raw key conditioned on all information available to the eavesdropper (See the leftover hashing lemma in Theorem. \ref{leftover hashing}). In our protocol (see Sec. \ref{protocol}), it is given by $H_{\text{min}}^{\epsilon_s}(A^N|X^N Y^N T^N E O_{EC})_{\rho_{|\hat{\Omega}}}$. Here we recall that $O_{EC}$ is the information exchanged by Alice and Bob during the error correction protocol. $X^N$ and $Y^N$ are the input bit strings (measurement settings) for Alice and Bob, respectively. $A^N$ is the output bit string of Alice. $T^N$ is the shared random key that determines whether the round is a test or a key generation round. $\hat{\Omega}$ is the event that the protocol does not abort and error correction succeeds.\\
	
	In order to bypass the conditioned state of $H_{\text{min}}^{\epsilon_s}(A^N|X^N Y^N T^N E O_{EC})_{\rho_{|\hat{\Omega}}}$, we can start from the definition of Secrecy (see Def. \ref{Secrecy}). Then we have to bound the term 
	\begin{equation}
	p(\Omega)\| \rho_{\Tilde{K}_{A}FE_{|\Omega}} - \tau_{\Tilde{K}_{A}}\otimes\rho_{FE}\|_{1} = \| \rho_{\Tilde{K}_{A}FE\land \Omega} - \tau_{\Tilde{K}_{A}}\otimes\rho_{FE \land \Omega}\|_{1} \, ,
	\end{equation}
	where $\rho_{\Tilde{K}_{A}FE\land \Omega} = p(\Omega)\rho_{\Tilde{K}_{A}FE_{|\Omega}}$ is a subnormalized state. Here we recall that $F$ is the classical register that stores the hash function $f$ (see Def. \ref{2-universal hash function}).\\
	
	Now using the leftover hashing lemma in Theorem \ref{leftover hashing}, we can generate an $(\epsilon_{s}+\epsilon_{PA})$-secret key of length \cite{murta2019towards}
	\begin{equation}
	l  \leqslant H_{\text{min}}^{\epsilon_{s}}(A^N|E)_{\rho \land \Omega}- 2\log\frac{1}{2\epsilon_{PA}} \, .
	\end{equation}
    In Ref.\cite{tomamichel2017largely}, it is proved that 
    \begin{equation}
        H_{\text{min}}^{\epsilon_{s}}(A^N|E)_{\rho \land \Omega} \ge H_{\text{min}}^{\epsilon_{s}}(A^N|E)_{\rho} \, .
    \end{equation}
	Thus, we proceed to estimate the quantity $H_{\text{min}}^{\epsilon_s}(A^N|X^N Y^N T^N E O_{EC})_{\rho}$ in order to bound the achievable secret key of length $l$. \\
	Using the chain rule relation for the smooth min-entropy conditioned on classical information \cite{tomamichel2015quantum}, we can write
	\begin{equation}
	H_{\text{min}}^{\epsilon_s}(A^N|X^N Y^N T^N E O_{EC})_{\rho} = H_{\text{min}}^{\epsilon_s}(A^N|X^N Y^N T^N E)_{\rho}-\text{leak}_{EC} \, .
	\end{equation}
	Thus, in order to bound $H_{\text{min}}^{\epsilon_s}(A^N|X^N Y^N T^N E O_{EC})_{\rho}$, we have to lower bound $H_{\text{min}}^{\epsilon_s}(A^N|X^N Y^N T^N E)_{\rho}$ and upper bound $\text{leak}_{EC}$ (the leakage due to the error correction). 

	\subsection{Estimation of $\text{leak}_{\text{EC}}$}\label{Estimation of leak}
	Alice and Bob perform an EC procedure so that Bob can compute a guess of Alice's raw key $A^N$. In order to verify if EC is successful, Alice chooses a two-universal hash function (uniformly at random) from the family of hash functions and computes a hash of length $\log(\frac{1}{\epsilon_{EC}})$ from her raw keys $A^N$. Then she sends the chosen hash function and the hashed value of her bits to Bob via a public channel. We denote all the classical communication (information leaked during EC, hash function and the hashed value for verification) by $O_{EC}$. Bob computes the hash function on his key. If the hashed values are equal, then Alice’s and Bob’s keys are the same with high probability. If the hashed values are different, the parties will abort the protocol. During this whole process, the amount of information about the key exposing to the adversary Eve is termed as $\text{leak}_{EC}$. In Ref.\cite{renner2005simple}, the $\text{leak}_{EC}$ is bounded by
	\begin{equation} \label{Leak Eq.1}
	\text{leak}_{EC} \leq H_0^{\epsilon'_{EC}}(A^N|B^N X^N Y^N T^N ) + \log\frac{1}{\epsilon_{EC}} \, ,
	\end{equation} 
	where $\epsilon_{EC}^c=\epsilon_{EC}+\epsilon'_{EC}$ (see Table.\ref{DIQKD_protocol_parameter}). $H_0$ is the R\'{e}nyi entropy introduced in Ref.\cite{renner2005simple}. In Ref.\cite{tomamichel2015quantum}, it is denoted as $\Bar{H}_0^{\uparrow}$.	If Alice and Bob do not abort, then their resultant bit string is identical ($ K_A=K_B $) with at least $ 1-\epsilon_{EC} $ probability. We can bound the entropy $ H_0^{\epsilon'_{EC}}(A^N|B^N X^N Y^N T^N ) $ in the following way:
	\begin{align} \label{Leak Eq.2}
	H_0^{\epsilon'_{EC}}(A^N|B^N X^N Y^N T^N ) &\leq H_{max}^{ \frac{\epsilon'_{EC}}{2} }(A^N|B^N X^N Y^N T^N )  +  \log \bigg(\frac{8}{{\epsilon'}^{2}_{EC}} + \frac{2}{2-{\epsilon'}_{EC}} \bigg) \\ \nonumber
	& \leq NH(A|BXYT)+4\sqrt{N}\log(2\sqrt{2^{\log_{2}d}}+1)\sqrt{\log \frac{8}{{\epsilon'_{EC}}^{2}}} + \\ \nonumber
	& \qquad \log \bigg(\frac{8}{{\epsilon'}^{2}_{EC}} + \frac{2}{2-{\epsilon'}_{EC}} \bigg) \, .
	\end{align}
	For the definition of $H_{0}^{\epsilon}(A|B)$, see Ref.\cite{renner2005simple}. The first inequality of Eq.~(\ref{Leak Eq.2}) comes from Ref.\cite{tomamichel2011leftover} and the Eq.~(B11) of Ref.[\cite{murta2019towards}]. The last inequality comes from the asymptotic equipartition property (see Theorem. \ref{Asymptotic equipartition property}), where we used the relations 
	\begin{align}
	    \delta(\epsilon,\chi) &= 4\log (\chi) \sqrt{\log \frac{2}{\epsilon^2}} \\ \nonumber
	    & \le 4\log(2 \sqrt{ 2^{\log_{2}d} }+1)\sqrt{\log(\frac{2}{\epsilon^2})} \, .
	\end{align}
	Here we have used $\chi \le 2 \sqrt{ 2^{\log_{2}d} }+1 $ which comes from
	\begin{align} \label{max-entropy log bound}
	\chi &= \sqrt{2^{-H_{min}(A\arrowvert E)_{\rho_{AE}}}}+\sqrt{2^{H_{max}(A\arrowvert E)_{\rho_{AE}}}}+1 \\ \nonumber
	 & \le 2 \sqrt{2^ { H_{max}(A \arrowvert XYTE)_{\rho}}}+1 \\ \nonumber
	 & \le 2 \sqrt{ 2^{\log_{2}d} }+1
	\end{align}
	The first inequality of Eq.~(\ref{max-entropy log bound}) follows from the fact that A is a classical register, and therefore has positive conditional min-entropy, which implies $ -H_{min}(A \arrowvert XYTE) \le H_{min}(A \arrowvert XYTE) \le H_{max}(A \arrowvert XYTE) $. For the second inequality of Eq.~(\ref{max-entropy log bound}), we use $ H_{max}(A \arrowvert XYTE) \le \log_{2}d $. 
	
	Therefore, from Eq.~(\ref{Leak Eq.1}) and Eq.~(\ref{Leak Eq.2}), we can bound the leakage in the following way
	\begin{align} \label{Leak Eq.3}
	\text{leak}_{EC} & \leq NH(A|BXYT)+\sqrt{n}(4\log(2\sqrt{2^{\log_{2}d}}+1))\sqrt{\log \frac{8}{{\epsilon'}^{2}_{EC}}} + \\ \nonumber
	& \qquad \log \bigg(\frac{8}{{\epsilon'}^{2}_{EC}} + \frac{2}{2-{\epsilon'}_{EC}} \bigg)+\log\frac{1}{\epsilon_{EC}} \, .
	\end{align}
	Now we bound the single round von-Neumann entropy $ H(A|BXYT) $ as
	\begin{align} \label{Leak Eq.4}
	H(A|BXYT) &= p(T = 0)H(A|BXY T = 0)+ p(T = 1)H(A|BXY T = 1)\\ \nonumber
	& \leq  (1-\xi)H(A|BXY T = 0)+\xi \log_{2}d	\\ \nonumber
	& \leq  (1-\xi-\eta)H(A|BXY T = 0)+(\xi+\eta) \log_{2}d \, .
	\end{align}
	See Table. \ref{DIQKD_protocol_parameter} for the details of $\xi$, $\eta$ and $\gamma_{est}$. For the first equality, we have used that for the conditional von Neumann entropy, it holds $H(A|BX)_{\rho}=\sum_{x}p(X=x)H(A|BX=x)$. We divide the measurement rounds into key generation (specified by $T=0$) and parameter estimation (specified by $T=1$), for details see Sec. \ref{protocol}. The first inequality comes from the fact that parameter estimation round's measurements were publicly communicated to estimate the Bell inequality and the corresponding violation. $\eta$ rounds of the raw key generation measurement were communicated through a public channel to estimate the QBER which leads to the last inequality. \\
	Now our goal is to estimate $H(A|BXY T = 0)$. For dichotomic observables and uniform marginals, $ H(A \arrowvert B) $ can be expressed as $ h(Q) $ \cite{pironio2009device}, where $ h $ is the binary entropy function, $ h(p):=-p\log_{2}p-(1-p)\log_{2}(1-p) $. Similarly for the $ [(m_a,m_b),d] $ Bell scenario, $ H(A \arrowvert B) $ can be expressed as a function of the QBER, $ H(A \arrowvert B) = -Q\log_{2}Q-(1-Q)\log_{2}(1-Q)+Q\log_{2}(d-1) $ \cite{bradler2016finite}. \\
	For our specific protocol (see Sec. \ref{protocol}), we bound $H(A|BXY T = 0)$ by a function of $ \hat{Q}_1 + \gamma_{est} $ (observed QBER + estimated statistical error), see Sec. \ref{BellVal_skr} for details:
	\begin{equation}\label{QBER bound}
	H(A|BXY,T=0) \le f(\hat{Q} + \gamma_{est}) \, ,
	\end{equation}
	where $f(x)=h(x)+x\log_{2}(d-1)$ ($ d $ is the number of outcomes per measurement in the Bell scenario) and $ h $ is the binary entropy function. From Eq.~(\ref{Leak Eq.4}) and Eq.~(\ref{QBER bound}), then it follows:
	\begin{equation} \label{Leak Eq.5}
	    H(A|BXYT) \leq (1-\xi-\eta) f(\hat{Q} + \gamma_{est}) + (\xi+\eta) \log_{2}d  \, .
	\end{equation}
	
	The leakage due to error correction is given by (from Eq.~(\ref{Leak Eq.3}) and Eq.~(\ref{Leak Eq.5})) 
	\begin{align}\label{leakage bound}
	\text{leak}_{EC} & \leq N[(1-\xi-\eta) f(\hat{Q} + \gamma_{est})+ (\xi+\eta) \log_{2}d] + \sqrt{N}\bigg(4\log(2\sqrt{2^{\log_{2}d}}+1)\bigg)\sqrt{\log \frac{8}{{\epsilon'}^{2}_{EC}}}+ \\ \nonumber
	& \qquad \qquad \log \bigg( \frac{8}{{\epsilon'}^{2}_{EC}} + \frac{2}{2-{\epsilon'}_{EC}} \bigg) +\log \frac{1}{\epsilon_{EC}} \, .
	\end{align}
	
	\subsection{Estimation of min-entropy $H_{\text{min}}^{\epsilon_s}(A^N|X^N Y^N T^N E)_{\rho}$} \label{Estimation of min-entropy}
    Finally, we lower bound $H_{\text{min}}^{\epsilon_s}(A^N|X^N Y^N T^N E)_{\rho}$. We use the asymptotic equipartition property (see Theorem \ref{Asymptotic equipartition property}) to lower bound the min-entropy of $N$ rounds by the von Neumann entropy of single rounds:
	\begin{align}
	H_{\text{min}}^{\epsilon_s}(A^N|X^N Y^N T^N E)_{\rho} &\geqslant NH(A|XYTE)_{\rho}- 4\sqrt{N}\log(2\sqrt{2^{\log_{2} d}}+1) \sqrt{\log\frac{2}{\epsilon_s^2}} \, .
	\end{align}
	Since, Alice's actions (and her device's) are independent of Bob’s choice of input, adding information about $ Y $ (Bob's input) does not increase (or decrease) the conditional von Neumann entropy $H(A|X,E)_{\rho}$. Since $H(A|X,E)_{\rho}$ and $H(A|XYE, T=1)_{\rho}$ are equivalent in our set-up, we will use both terms interchangeably. In the general scenario, the conditional von-Neumann entropy is hard to calculate analytically. But the conditional von Neumann entropy can be lower bounded by the conditional min-entropy as
	\begin{equation} \label{von-Neumann min-entropy reln}
	H(A|XYT,E)_{\rho} \geq H_{min}(A|XYT,E)_{\rho} \, .
	\end{equation}
	The advantage of looking at the conditional min-entropy is that we can express it as $H_{min}(A|XYE,T=1)_{\rho}=-\log_{2}P_g(A|X,E)$ \cite{konig2009operational}, where $ P_g(A|X,E) $ is Eve's guessing probability about Alice's $ X $-measurement results $ A $ conditioned on her side information $ E $. $ P_g(A|X,E) $ can be upper bounded by a function $ G_x $ of the expected Bell violation $B[\textbf{P}]$ \cite{masanes2011secure} by solving a semi-definite programme \cite{johnston2016qetlab}, i.e. $ P_{g}(A|X,E) \leq G_x(B[\textbf{P}]) $. For our specific protocol (see Sec. \ref{protocol}), we will lower bound the min-entropy (via upper bounding the guessing probability $ P_g(A|X,E) $) using the Bell inequality $ B $ and corresponding Bell value $ B[\hat{\textbf{P}}_2]-\delta_{est}-\delta_{con} $ (explained in Sec.\ref{BellVal_skr}): 
	\begin{equation} \label{entropy bound}
	H_{\text{min}}(A|XYE, T=1)_{\rho} \geqslant -\log_{2} G_{x} (B[\hat{\textbf{P}}_2]-\delta_{est}-\delta_{con}) \, .
	\end{equation} \\

	Finally, putting Eq.~(\ref{leakage bound}) and Eq.~(\ref{entropy bound}) together, we have either the protocol mentioned in Sec. \ref{protocol} aborts with probability higher than $1-(\epsilon_{con}+\epsilon^{c}_{EC})$ or a $(2\epsilon_{EC} + \epsilon_{s} + \epsilon_{PA})$)-correct and secret key can be generated of length $l$. The length $l$ is bounded by 
	\begin{align}
	l &\leqslant N\Bigg[ -\log_{2} G_{x} (B[\hat{\textbf{P}}_2]-\delta_{est}-\delta_{con}) -  (1-\xi-\eta) f(\hat{Q} + \gamma_{est}) - (\xi+\eta) \log_{2}d \Bigg]- \\ \nonumber
	& \sqrt{N}\bigg( 4\log \big(2\sqrt{2^{\log_{2} d}}+1 \big) \bigg(\sqrt{\log\frac{8}{{\epsilon'}^{2}_{EC}}}+\sqrt{\log\frac{2}{\epsilon_s^2}}\bigg)\bigg) - \log\bigg(\frac{8}{{\epsilon'}^{2}_{EC}} + \frac{2}{2-{\epsilon'}_{EC}}\bigg)-\log \frac{1}{\epsilon_{EC}}-2\log\frac{1}{2\epsilon_{PA}} \, .
	\end{align}
	
	\section{Measurement Settings}\label{imp meas settings}
	Here we list the explicit measurement settings employed in Sec. \ref{Results 2m2}. 
	\begin{equation}\label{3-meas settings_1 appendix}
	\begin{aligned}
	x &=1 \Rightarrow \sigma_z \, ,  
	&
	y &=1 \Rightarrow \begin{bmatrix}
	0.7064 &  -0.6632 + 0.2473i \\
	-0.6632 - 0.2473i & -0.7064
	\end{bmatrix}\, , \\
	x &=2 \Rightarrow \begin{bmatrix}
	-0.1817 &	0.1307 + 0.9746i \\
	0.1307 - 0.9746i &	0.1817
	\end{bmatrix}\, ,
	& 
	y &=2 \Rightarrow \begin{bmatrix}
	-0.6882 & -0.2128 - 0.6936i \\
	-0.2128 + 0.6936i & 0.6882
	\end{bmatrix}\, , \\
	x &=3 \Rightarrow \begin{bmatrix}
	-0.7746 & 0.6186 - 0.1315i \\
	0.6186 + 0.1315i & 0.7746
	\end{bmatrix}\, ,
	&
	y &=3 \Rightarrow \begin{bmatrix}
	0.4046 & -0.1960 + 0.8932i \\
	-0.1960 - 0.8932i  & -0.4046
	\end{bmatrix}\, .
	\end{aligned}
	\end{equation}
	Using the following set of measurement settings for Alice and Bob in Eq.~(\ref{3-meas settings_1 appendix}), one can generate a higher secret key rate employing our method than using any subset of two measurement settings per party using the standard CHSH inequality.
	
	\begin{equation}\label{3-meas settings_2 appendix}
	\begin{aligned}
	x &=1 \Rightarrow \sigma_z \, , 
	&
	y &=1 \Rightarrow \begin{bmatrix}
	-0.4091 &  -0.5937 + 0.6930i \\
	-0.5937 - 0.6930i &  0.4091
	\end{bmatrix}\, , \\
	x &=2 \Rightarrow \begin{bmatrix}
	0.7019 &	0.5167 - 0.4903i \\
	0.5167 + 0.4903i &	-0.7019
	\end{bmatrix}\, ,
	&
	y &=2 \Rightarrow \begin{bmatrix}
	-0.6133 &  -0.2514 + 0.7488i \\
	-0.2514 - 0.7488i & 0.6133
	\end{bmatrix}\, .
	\end{aligned}
	\end{equation}
	
	Using the following measurement settings in Eq.~(\ref{3-meas settings_2 appendix}) and the state in Eq.~(\ref{BellState}) with no white noise, one cannot extract a secret key using our method or blindly using the CHSH inequality. 
	\begin{equation}\label{add measurement appendix}
	\begin{aligned}
	x &=3 \Rightarrow \begin{bmatrix}
	-0.1457 & -0.9777 + 0.1513i \\
	-0.9777 - 0.1513i &  0.1457
	\end{bmatrix}\, ,
	&
	y &=3 \Rightarrow \begin{bmatrix}
	-0.9020 & -0.3795 - 0.2056i\\
	-0.3795 + 0.2056i  &  0.9020
	\end{bmatrix}\, .
	\end{aligned}
	\end{equation}
	However, by adding another set of measurements for Alice and Bob mentioned in Eq.~(\ref{add measurement appendix}), it is possible to achieve a non-zero secret key rate using our method. 
	
	\section{Tabular representation of Bell inequality}\label{APP CG notation}
	Here we introduce an alternative representation of the hyperplane vector (see Eq.~(\ref{hyperplane})). We rearrange the entries (coefficients of the Bell inequality) in a tabular construction. For the $[2,2]$ Bell scenario, it is represented in Table \ref{CHSH hyperplane table [2,2]}. 
		\begin{table}[h]
			\centering
			\begin{tabular}{p{1cm}  p{1cm} | p{1cm}  p{1cm}}
				$h_{A_1B_1}^{11}$ & $h_{A_1B_1}^{12}$ & $h_{A_1B_2}^{11}$ & $h_{A_1B_2}^{12}$ \\ 
				$h_{A_1B_1}^{21}$ & $h_{A_1B_1}^{22}$ & $h_{A_1B_2}^{21}$ & $h_{A_1B_2}^{22}$ \\  
				\hline 
				$h_{A_2B_1}^{11}$ & $h_{A_2B_1}^{12}$ & $h_{A_2B_2}^{11}$ & $h_{A_2B_2}^{12}$ \\  
				$h_{A_2B_1}^{21}$ & $h_{A_2B_1}^{22}$ & $h_{A_2B_2}^{21}$ & $h_{A_2B_2}^{22}$ 
			\end{tabular}
			\caption{Bell inequality table for the [2,2] scenario.}
			\label{CHSH hyperplane table [2,2]}
		\end{table}\\
		This representation is used in Table \ref{CHSH facet}. Similarly, we reorder the elements of the hyperplane vector for the [2,3] Bell scenario in the following way (see Table \ref{CGLMP hyperplane table [2,3]}):
		\begin{table}[h]
			\centering
			\begin{tabular}{p{1cm}  p{1cm} p{1cm} | p{1cm}  p{1cm} p{1cm} }
				$h_{A_1B_1}^{11}$ & $h_{A_1B_1}^{12}$ & $h_{A_1B_1}^{13}$ & $h_{A_1B_2}^{11}$ & $h_{A_1B_2}^{12}$ & $h_{A_1B_2}^{13}$ \\  
				$h_{A_1B_1}^{21}$ & $h_{A_1B_1}^{22}$ & $h_{A_1B_1}^{23}$ & $h_{A_1B_2}^{21}$ & $h_{A_1B_2}^{22}$ & $h_{A_1B_2}^{23}$ \\ 
				$h_{A_1B_1}^{31}$ & $h_{A_1B_1}^{32}$ & $h_{A_1B_1}^{33}$ & $h_{A_1B_2}^{31}$ & $h_{A_1B_2}^{32}$ & $h_{A_1B_2}^{33}$ \\ 
				\hline 
				$h_{A_2B_1}^{11}$ & $h_{A_2B_1}^{12}$ & $h_{A_2B_1}^{13}$ & $h_{A_2B_2}^{11}$ & $h_{A_2B_2}^{12}$ & $h_{A_2B_2}^{13}$ \\ 
				$h_{A_2B_1}^{21}$ & $h_{A_2B_1}^{22}$ & $h_{A_2B_1}^{23}$ & $h_{A_2B_2}^{21}$ & $h_{A_2B_2}^{22}$ & $h_{A_2B_2}^{23}$ \\
				$h_{A_2B_1}^{31}$ & $h_{A_2B_1}^{32}$ & $h_{A_2B_1}^{33}$ & $h_{A_2B_2}^{31}$ & $h_{A_2B_2}^{32}$ & $h_{A_2B_2}^{33}$ 
			\end{tabular}
			\caption{Bell inequality table for the [2,3] scenario.}
			\label{CGLMP hyperplane table [2,3]}
	\end{table} \\ 
	This tabular representation is used to describe the Bell inequality in Table \ref{BI_CGLMP3}. For the generalised $[m,d]$ scenario, the reordered hyperplane vector reads
	\begin{table}[h]
		\centering
		\begin{tabular}{p{1cm}  p{1cm} p{1cm} | p{1cm}  p{1cm} p{1cm} | p{1cm}  p{1cm} p{1cm} }
			$h_{A_1B_1}^{11}$ & $ \dots $ & $h_{A_1B_1}^{1d}$ &  & $\dots$ &  & $h_{A_1B_m}^{11}$ & $ \dots $ & $h_{A_1B_m}^{1d}$ \\  
			$ \vdots $ & $ \ddots $ & $ \vdots $ &$ \vdots $ & $ \ddots $ & $ \vdots $ & $ \vdots $ & $ \ddots $ & $ \vdots $ \\ 
			$h_{A_1B_1}^{d1}$ & $ \dots $ & $h_{A_1B_1}^{dd}$ &  & $\dots$ &  & $h_{A_1B_m}^{d1}$ & $ \dots $ & $h_{A_1B_m}^{dd}$ \\ 
			\hline 
			   & $\dots$ &  &   & $\dots$ &  &   & $\dots$ &  \\
			 $ \vdots $ & $ \ddots $ & $ \vdots $ &$ \vdots $ & $ \ddots $ & $ \vdots $ & $ \vdots $ & $ \ddots $ & $ \vdots $ \\  
			   & $\dots$ &  &   & $\dots$ &  &   & $\dots$ &  \\    
			\hline
			$h_{A_mB_1}^{11}$ & $ \dots $ & $h_{A_mB_1}^{1d}$ &  & $\dots$ &  & $h_{A_mB_m}^{11}$ & $ \dots $ & $h_{A_mB_m}^{1d}$ \\  
			$ \vdots $ & $ \ddots $ & $ \vdots $ & $ \vdots $ & $\ddots$ & $ \vdots $ & $ \vdots $ &  $ \ddots $  & $ \vdots $ \\ 
			$h_{A_mB_1}^{d1}$ & $ \dots $ & $h_{A_mB_1}^{dd}$ &  & $\dots$ &  & $h_{A_mB_m}^{d1}$ & $ \dots $ & $h_{A_mB_m}^{dd}$ \\ 
		\end{tabular}
		\caption{Bell inequality table for the $[m,d]$ scenario.}
		\label{CGLMP hyperplane table [m,d]}
	\end{table}	
	\twocolumngrid
	\bibliographystyle{ieeetr}
	\bibliography{references}

\begin{thebibliography}{10}

\bibitem{bennett1984proceedings}
C.~H. Bennett and G.~Brassard, ``Proceedings of the $\text{IEEE International
  Conference on Computers}$, $\text{Systems}$ $\text{and Signal Processing}$,''
  1984.

\bibitem{ekert1991quantum}
A.~K. Ekert, ``Quantum cryptography based on \text{Bell’s} theorem,'' {\em
  Physical Review Letters}, vol.~67, no.~6, p.~661, 1991.

\bibitem{bennett1992quantum}
C.~H. Bennett, ``Quantum cryptography using any two nonorthogonal states,''
  {\em Physical Review Letters}, vol.~68, no.~21, p.~3121, 1992.

\bibitem{bruss1998optimal}
D.~Bru{\ss}, ``Optimal eavesdropping in quantum cryptography with six states,''
  {\em Physical Review Letters}, vol.~81, no.~14, p.~3018, 1998.

\bibitem{renner2008security}
R.~Renner, ``Security of quantum key distribution,'' {\em International Journal
  of Quantum Information}, vol.~6, no.~01, pp.~1--127, 2008.

\bibitem{lo2005decoy}
H.-K. Lo, X.~Ma, and K.~Chen, ``Decoy state quantum key distribution,'' {\em
  Physical Review Letters}, vol.~94, no.~23, p.~230504, 2005.

\bibitem{gottesman2004security}
D.~Gottesman, H.-K. Lo, N.~Lutkenhaus, and J.~Preskill, ``Security of quantum
  key distribution with imperfect devices,'' in {\em International Symposium on
  Information Theory, 2004. ISIT 2004. Proceedings.}, p.~136, IEEE, 2004.

\bibitem{shor2000simple}
P.~W. Shor and J.~Preskill, ``Simple proof of security of the bb84 quantum key
  distribution protocol,'' {\em Physical Review Letters}, vol.~85, no.~2,
  p.~441, 2000.

\bibitem{scarani2009security}
V.~Scarani, H.~Bechmann-Pasquinucci, N.~J. Cerf, M.~Du{\v{s}}ek,
  N.~L{\"u}tkenhaus, and M.~Peev, ``The security of practical quantum key
  distribution,'' {\em Reviews of Modern Physics}, vol.~81, no.~3, p.~1301,
  2009.

\bibitem{ma2005practical}
X.~Ma, B.~Qi, Y.~Zhao, and H.-K. Lo, ``Practical decoy state for quantum key
  distribution,'' {\em Physical Review A}, vol.~72, no.~1, p.~012326, 2005.

\bibitem{lo2014secure}
H.-K. Lo, M.~Curty, and K.~Tamaki, ``Secure quantum key distribution,'' {\em
  Nature Photonics}, vol.~8, no.~8, p.~595, 2014.

\bibitem{tomamichel2012tight}
M.~Tomamichel, C.~C.~W. Lim, N.~Gisin, and R.~Renner, ``Tight finite-key
  analysis for quantum cryptography,'' {\em Nature communications}, vol.~3,
  no.~1, pp.~1--6, 2012.

\bibitem{lydersen2010hacking}
L.~Lydersen, C.~Wiechers, C.~Wittmann, D.~Elser, J.~Skaar, and V.~Makarov,
  ``Hacking commercial quantum cryptography systems by tailored bright
  illumination,'' {\em Nature photonics}, vol.~4, no.~10, p.~686, 2010.

\bibitem{gerhardt2011full}
I.~Gerhardt, Q.~Liu, A.~Lamas-Linares, J.~Skaar, C.~Kurtsiefer, and V.~Makarov,
  ``Full-field implementation of a perfect eavesdropper on a quantum
  cryptography system,'' {\em Nature communications}, vol.~2, no.~1, pp.~1--6,
  2011.

\bibitem{zhao2008quantum}
Y.~Zhao, C.-H.~F. Fung, B.~Qi, C.~Chen, and H.-K. Lo, ``Quantum hacking:
  Experimental demonstration of time-shift attack against practical
  quantum-key-distribution systems,'' {\em Physical Review A}, vol.~78, no.~4,
  p.~042333, 2008.

\bibitem{mayers1998quantum}
D.~Mayers and A.~Yao, ``Quantum cryptography with imperfect apparatus,'' {\em
  arXiv preprint quant-ph/9809039}, 1998.

\bibitem{barrett2005no}
J.~Barrett, L.~Hardy, and A.~Kent, ``No signaling and quantum key
  distribution,'' {\em Physical Review Letters}, vol.~95, no.~1, p.~010503,
  2005.

\bibitem{acin2007device}
A.~Ac{\'\i}n, N.~Brunner, N.~Gisin, S.~Massar, S.~Pironio, and V.~Scarani,
  ``Device-independent security of quantum cryptography against collective
  attacks,'' {\em Physical Review Letters}, vol.~98, no.~23, p.~230501, 2007.

\bibitem{acin2006bell}
A.~Acin, N.~Gisin, and L.~Masanes, ``From bell’s theorem to secure quantum
  key distribution,'' {\em Physical review letters}, vol.~97, no.~12,
  p.~120405, 2006.

\bibitem{pironio2009device}
S.~Pironio, A.~Acin, N.~Brunner, N.~Gisin, S.~Massar, and V.~Scarani,
  ``Device-independent quantum key distribution secure against collective
  attacks,'' {\em New Journal of Physics}, vol.~11, no.~4, p.~045021, 2009.

\bibitem{hanggi2010device}
E.~H{\"a}nggi and R.~Renner, ``Device-independent quantum key distribution with
  commuting measurements,'' {\em arXiv preprint arXiv:1009.1833}, 2010.

\bibitem{hanggi2010efficient}
E.~H{\"a}nggi, R.~Renner, and S.~Wolf, ``Efficient device-independent quantum
  key distribution,'' in {\em Annual International Conference on the Theory and
  Applications of Cryptographic Techniques}, pp.~216--234, Springer, 2010.

\bibitem{masanes2014full}
L.~Masanes, R.~Renner, M.~Christandl, A.~Winter, and J.~Barrett, ``Full
  security of quantum key distribution from no-signaling constraints,'' {\em
  IEEE Transactions on Information Theory}, vol.~60, no.~8, pp.~4973--4986,
  2014.

\bibitem{arnon2018practical}
R.~Arnon-Friedman, F.~Dupuis, O.~Fawzi, R.~Renner, and T.~Vidick, ``Practical
  device-independent quantum cryptography via entropy accumulation,'' {\em
  Nature Communications}, vol.~9, no.~1, pp.~1--11, 2018.

\bibitem{arnon2019simple}
R.~Arnon-Friedman, R.~Renner, and T.~Vidick, ``Simple and tight
  device-independent security proofs,'' {\em SIAM Journal on Computing},
  vol.~48, no.~1, pp.~181--225, 2019.

\bibitem{masanes2011secure}
L.~Masanes, S.~Pironio, and A.~Ac{\'\i}n, ``Secure device-independent quantum
  key distribution with causally independent measurement devices,'' {\em Nature
  communications}, vol.~2, p.~238, 2011.

\bibitem{ribeiro2018fully}
J.~Ribeiro, G.~Murta, and S.~Wehner, ``Fully device-independent conference key
  agreement,'' {\em Physical Review A}, vol.~97, no.~2, p.~022307, 2018.

\bibitem{holz2019genuine}
T.~Holz, H.~Kampermann, and D.~Bru{\ss}, ``A genuine multipartite \text{Bell}
  inequality for device-independent conference key agreement,'' {\em arXiv
  preprint arXiv:1910.11360}, 2019.

\bibitem{vazirani2014fully}
U.~Vazirani and T.~Vidick, ``Fully device-independent quantum key
  distribution,'' {\em Physical Review Letters}, vol.~113, no.~14, p.~140501,
  2014.

\bibitem{nieto2014using}
O.~Nieto-Silleras, S.~Pironio, and J.~Silman, ``Using complete measurement
  statistics for optimal device-independent randomness evaluation,'' {\em New
  Journal of Physics}, vol.~16, no.~1, p.~013035, 2014.

\bibitem{bancal2014more}
J.-D. Bancal, L.~Sheridan, and V.~Scarani, ``More randomness from the same
  data,'' {\em New Journal of Physics}, vol.~16, no.~3, p.~033011, 2014.

\bibitem{pitowski1989quantum}
I.~Pitowski, ``Quantum probability,'' {\em Quantum Logic}, 1989.

\bibitem{fine1982hidden}
A.~Fine, ``Hidden variables, joint probability, and the \text{Bell}
  inequalities,'' {\em Physical Review Letters}, vol.~48, no.~5, p.~291, 1982.

\bibitem{pitowsky1991correlation}
I.~Pitowsky, ``Correlation polytopes: their geometry and complexity,'' {\em
  Mathematical Programming}, vol.~50, no.~1-3, pp.~395--414, 1991.

\bibitem{bell1964einstein}
J.~S. Bell, ``On the einstein podolsky rosen paradox,'' {\em Physics Physique
  Fizika}, vol.~1, no.~3, p.~195, 1964.

\bibitem{szangolies2017device}
J.~Szangolies, H.~Kampermann, and D.~Bru{\ss}, ``Device-independent bounds on
  detection efficiency,'' {\em Physical Review Letters}, vol.~118, no.~26,
  p.~260401, 2017.

\bibitem{hoeffding1963large}
W.~Hoeffding, ``Large deviations in multinomial distributions.,'' {\em Math.
  Statist}, vol.~34, p.~1620, 1963.

\bibitem{hoeffding1994probability}
W.~Hoeffding, ``Probability inequalities for sums of bounded random
  variables,'' in {\em The Collected Works of Wassily Hoeffding}, pp.~409--426,
  Springer, 1994.

\bibitem{grasselli2020quantum}
F.~Grasselli, {\em Quantum Cryptography: from Key Distribution to Conference
  Key Agreement}.
\newblock Springer Nature, 2020.

\bibitem{beaudry2015assumptions}
N.~J. Beaudry, ``Assumptions in quantum cryptography,'' {\em arXiv preprint
  arXiv:1505.02792}, 2015.

\bibitem{scarani2008security}
V.~Scarani and R.~Renner, ``Security bounds for quantum cryptography with
  finite resources,'' in {\em Workshop on Quantum Computation, Communication,
  and Cryptography}, pp.~83--95, Springer, 2008.

\bibitem{devetak2005distillation}
I.~Devetak and A.~Winter, ``Distillation of secret key and entanglement from
  quantum states,'' {\em Proceedings of the Royal Society A: Mathematical,
  Physical and engineering sciences}, vol.~461, no.~2053, pp.~207--235, 2005.

\bibitem{murta2019towards}
G.~Murta, S.~B. van Dam, J.~Ribeiro, R.~Hanson, and S.~Wehner, ``Towards a
  realization of device-independent quantum key distribution,'' {\em Quantum
  Science and Technology}, vol.~4, no.~3, p.~035011, 2019.

\bibitem{konig2009operational}
R.~Konig, R.~Renner, and C.~Schaffner, ``The operational meaning of min-and
  max-entropy,'' {\em IEEE Transactions on Information Theory}, vol.~55, no.~9,
  pp.~4337--4347, 2009.

\bibitem{johnston2016qetlab}
N.~Johnston, ``Qetlab: A matlab toolbox for quantum entanglement, version
  0.9,'' {\em qetlab. com}, 2016.

\bibitem{navascues2007bounding}
M.~Navascu{\'e}s, S.~Pironio, and A.~Ac{\'\i}n, ``Bounding the set of quantum
  correlations,'' {\em Physical Review Letters}, vol.~98, no.~1, p.~010401,
  2007.

\bibitem{navascues2008convergent}
M.~Navascu{\'e}s, S.~Pironio, and A.~Ac{\'\i}n, ``A convergent hierarchy of
  semidefinite programs characterizing the set of quantum correlations,'' {\em
  New Journal of Physics}, vol.~10, no.~7, p.~073013, 2008.

\bibitem{Lofberg2004}
J.~L{\"{o}}fberg, ``Yalmip: A toolbox for modeling and optimization in
  matlab,'' in {\em In Proceedings of the CACSD Conference}, (Taipei, Taiwan),
  2004.

\bibitem{cvx}
M.~Grant and S.~Boyd, ``{CVX}: Matlab software for disciplined convex
  programming, version 2.1.'' \url{http://cvxr.com/cvx}, Mar. 2014.

\bibitem{boyd2004convex}
S.~Boyd, S.~P. Boyd, and L.~Vandenberghe, {\em Convex optimization}.
\newblock Cambridge University Press, 2004.

\bibitem{gb08}
M.~Grant and S.~Boyd, ``Graph implementations for nonsmooth convex programs,''
  in {\em Recent Advances in Learning and Control} (V.~Blondel, S.~Boyd, and
  H.~Kimura, eds.), Lecture Notes in Control and Information Sciences,
  pp.~95--110, Springer-Verlag Limited, 2008.
\newblock \url{http://stanford.edu/~boyd/graph_dcp.html}.

\bibitem{wittek2015algorithm}
P.~Wittek, ``Algorithm 950: Ncpol2sdpa—sparse semidefinite programming
  relaxations for polynomial optimization problems of noncommuting variables,''
  {\em ACM Transactions on Mathematical Software (TOMS)}, vol.~41, no.~3,
  p.~21, 2015.

\bibitem{qetlab}
N.~Johnston, ``{QETLAB}: A {MATLAB} toolbox for quantum entanglement, version
  0.9.'' \url{http://qetlab.com}, Jan. 2016.

\bibitem{bradler2016finite}
K.~Br{\'a}dler, M.~Mirhosseini, R.~Fickler, A.~Broadbent, and R.~Boyd,
  ``Finite-key security analysis for multilevel quantum key distribution,''
  {\em New Journal of Physics}, vol.~18, no.~7, p.~073030, 2016.

\bibitem{clauser1969proposed}
J.~F. Clauser, M.~A. Horne, A.~Shimony, and R.~A. Holt, ``Proposed experiment
  to test local hidden-variable theories,'' {\em Physical Review Letters},
  vol.~23, no.~15, p.~880, 1969.

\bibitem{acin2002quantum}
A.~Acin, T.~Durt, N.~Gisin, and J.~I. Latorre, ``Quantum nonlocality in two
  three-level systems,'' {\em Physical Review A}, vol.~65, no.~5, p.~052325,
  2002.

\bibitem{collins2002bell}
D.~Collins, N.~Gisin, N.~Linden, S.~Massar, and S.~Popescu, ``Bell inequalities
  for arbitrarily high-dimensional systems,'' {\em Physical Review Letters},
  vol.~88, no.~4, p.~040404, 2002.

\bibitem{collins2004relevant}
D.~Collins and N.~Gisin, ``A relevant two qubit bell inequality inequivalent to
  the chsh inequality,'' {\em Journal of Physics A: Mathematical and General},
  vol.~37, no.~5, p.~1775, 2004.

\bibitem{fonseca2015measure}
E.~Fonseca and F.~Parisio, ``Measure of nonlocality which is maximal for
  maximally entangled qutrits,'' {\em Physical Review A}, vol.~92, no.~3,
  p.~030101, 2015.

\bibitem{lipinska2018towards}
V.~Lipinska, F.~J. Curchod, A.~M{\'a}ttar, and A.~Ac{\'\i}n, ``Towards an
  equivalence between maximal entanglement and maximal quantum nonlocality,''
  {\em New Journal of Physics}, vol.~20, no.~6, p.~063043, 2018.

\bibitem{de2017multipartite}
A.~De~Rosier, J.~Gruca, F.~Parisio, T.~V{\'e}rtesi, and W.~Laskowski,
  ``Multipartite nonlocality and random measurements,'' {\em Physical Review
  A}, vol.~96, no.~1, p.~012101, 2017.

\bibitem{de2020strength}
A.~de~Rosier, J.~Gruca, F.~Parisio, T.~V{\'e}rtesi, and W.~Laskowski,
  ``Strength and typicality of nonlocality in multisetting and multipartite
  bell scenarios,'' {\em Physical Review A}, vol.~101, no.~1, p.~012116, 2020.

\bibitem{fonseca2018survey}
A.~Fonseca, A.~De~Rosier, T.~V{\'e}rtesi, W.~Laskowski, and F.~Parisio,
  ``Survey on the \text{Bell} nonlocality of a pair of entangled qudits,'' {\em
  Physical Review A}, vol.~98, no.~4, p.~042105, 2018.

\bibitem{barasinski2018volume}
A.~Barasi{\'n}ski and M.~Nowotarski, ``Volume of violation of \text{Bell}-type
  inequalities as a measure of nonlocality,'' {\em Physical Review A}, vol.~98,
  no.~2, p.~022132, 2018.

\bibitem{massart2007concentration}
P.~Massart, {\em Concentration inequalities and model selection}, vol.~6.
\newblock Springer, 2007.

\bibitem{boucheron2013concentration}
S.~Boucheron, G.~Lugosi, and P.~Massart, {\em Concentration inequalities: A
  nonasymptotic theory of independence}.
\newblock Oxford University Press, 2013.

\bibitem{chung2006concentration}
F.~Chung and L.~Lu, ``Concentration inequalities and martingale inequalities: a
  survey,'' {\em Internet Mathematics}, vol.~3, no.~1, pp.~79--127, 2006.

\bibitem{schwonnek2020robust}
R.~Schwonnek, K.~T. Goh, I.~W. Primaatmaja, E.~Y.-Z. Tan, R.~Wolf, V.~Scarani,
  and C.~C.-W. Lim, ``Robust device-independent quantum key distribution,''
  {\em arXiv preprint arXiv:2005.02691}, 2020.

\bibitem{brown2020computing}
P.~Brown, H.~Fawzi, and O.~Fawzi, ``Computing conditional entropies for quantum
  correlations,'' {\em arXiv preprint arXiv:2007.12575}, 2020.

\bibitem{tomamichel2015quantum}
M.~Tomamichel, {\em Quantum Information Processing with Finite Resources:
  Mathematical Foundations}, vol.~5.
\newblock Springer, 2015.

\bibitem{renner2005simple}
R.~Renner and S.~Wolf, ``Simple and tight bounds for information reconciliation
  and privacy amplification,'' in {\em International conference on the Theory
  and Application of Cryptology and Information security}, pp.~199--216,
  Springer, 2005.

\bibitem{vitanov2013chain}
A.~Vitanov, F.~Dupuis, M.~Tomamichel, and R.~Renner, ``Chain rules for smooth
  min-and max-entropies,'' {\em IEEE Transactions on Information Theory},
  vol.~59, no.~5, pp.~2603--2612, 2013.

\bibitem{carter1979universal}
J.~L. Carter and M.~N. Wegman, ``Universal classes of hash functions,'' {\em
  Journal of Computer and System Sciences}, vol.~18, no.~2, pp.~143--154, 1979.

\bibitem{tomamichel2011leftover}
M.~Tomamichel, C.~Schaffner, A.~Smith, and R.~Renner, ``Leftover hashing
  against quantum side information,'' {\em IEEE Transactions on Information
  Theory}, vol.~57, no.~8, pp.~5524--5535, 2011.

\bibitem{tomamichel2009fully}
M.~Tomamichel, R.~Colbeck, and R.~Renner, ``A fully quantum asymptotic
  equipartition property,'' {\em IEEE Transactions on Information Theory},
  vol.~55, no.~12, pp.~5840--5847, 2009.

\bibitem{grasselli2019conference}
F.~Grasselli, H.~Kampermann, and D.~Bru{\ss}, ``Conference key agreement with
  single-photon interference,'' {\em New Journal of Physics}, vol.~21, no.~12,
  p.~123002, 2019.

\bibitem{yin2019finite}
H.-L. Yin and Z.-B. Chen, ``Finite-key analysis for twin-field quantum key
  distribution with composable security,'' {\em Scientific reports}, vol.~9,
  no.~1, pp.~1--9, 2019.

\bibitem{nielsen2002quantum}
M.~A. Nielsen and I.~Chuang, ``Quantum computation and quantum information,''
  2002.

\bibitem{tomamichel2017largely}
M.~Tomamichel and A.~Leverrier, ``A largely self-contained and complete
  security proof for quantum key distribution,'' {\em Quantum}, vol.~1, p.~14,
  2017.

\end{thebibliography}
	
\end{document}